\newcommand{\thegraph}{\mathcal{G}}
\newcommand{\thegraphell}{\mathcal{E}} 
\newcommand{\thegraphjac}{\mathcal{J}}
\newcommand{\Z}{\mathbb{Z}}
\newcommand{\ZZ}{\Z}  
\newcommand{\Q}{\mathbb{Q}}
\newcommand{\QQ}{\Q}  
\newcommand{\F}{\mathbb{F}}
\newcommand{\FF}{\F}  
\newcommand{\FFbar}{\overline{\FF}}  
\newtheorem{theorem}{Theorem}
\newtheorem{proposition}{Proposition}
\newtheorem{conjecture}{Conjecture}
\newtheorem{problem}{Problem}
\theoremstyle{definition}
\newtheorem{definition}{Definition}
\newtheorem{remark}{Remark}
\newtheorem{example}{Example}
\newcommand{\splitatcommas}[1]{%
  \begingroup
  \begingroup\lccode`~=`, \lowercase{\endgroup
    \edef~{\mathchar\the\mathcode`, \penalty0 \noexpand\hspace{0pt plus 1em}}%
  }\mathcode`,="8000 #1%
  \endgroup
}
\begin{document}
\title[Hash functions from superspecial genus-$2$ curves]{Hash functions from superspecial genus-$2$ curves using Richelot isogenies}
\date{}
\author{Wouter Castryck}
\address{Section of Algebra, Department of Mathematics, KU Leuven}
\email{wouter.castryck@esat.kuleuven.be}
\author{Thomas Decru}
\address{imec-COSIC, Department of Electrical Engineering, KU Leuven}
\email{thomas.decru@esat.kuleuven.be}
\author{Benjamin Smith}
\address{Inria \textit{and} Laboratoire d’Informatique de l’\'Ecole polytechnique, Universit\'e Paris–Saclay, Palaiseau, France}
\email{smith@lix.polytechnique.fr}

\maketitle 

\begin{abstract}
Last year Takashima proposed a version of Charles, Goren and Lauter's hash function using Richelot isogenies, starting from a genus-$2$ curve that allows for all subsequent arithmetic to be performed over a quadratic finite field $\FF_{p^2}$. In a very recent paper Flynn and Ti point out that Takashima's hash function is insecure due to the existence of small isogeny cycles. We revisit the construction and show that it can be repaired by imposing a simple restriction, which moreover clarifies the security analysis. The runtime of the resulting hash function is dominated by the extraction of $3$ square roots for every block of $3$ bits of the message, as compared to one square root per bit in the elliptic curve case; however in our setting the extractions can be parallelized and are done in a finite field whose bit size is reduced by a factor $3$. Along the way we argue that the full supersingular isogeny graph is the wrong context in which to study higher-dimensional analogues of Charles, Goren and Lauter's hash function, and advocate the use of the superspecial subgraph, which is the natural framework in which to view Takashima's $\FF_{p^2}$-friendly starting curve.
\end{abstract}

\section{
    Introduction
}

After a cautious start with Couveignes' unpublished note~\cite{couveignes} from 1997 and Stolbunov's master thesis~\cite{stolbunov} from 2004, 
the area of isogeny-based cryptography took a more visible turn in 2006 when Charles, Goren and Lauter~\cite{CGL} showed how to construct collision-resistant hash functions from deterministic walks in isogeny graphs of supersingular elliptic curves over finite fields.
About five years later Jao and De Feo applied similar ideas to the design of a key exchange protocol~\cite{SIDH,jaodefeoplut} now known as SIDH, after which isogenies became a very active topic of cryptographic research, largely due to their promise of leading to quantum resistant hard problems. 
Some of the recent constructions include non-interactive key exchange~\cite{defeokieffersmith,csidh}, signatures~\cite{seasign,seasign2} and verifiable delay functions~\cite{VDF}.  
Last January it was announced that SIKE~\cite{SIKE}, which is an incarnation of SIDH, is one of the seventeen second-round contenders to become a NIST standard for post-quantum key establishment.\footnote{See \url{https://csrc.nist.gov/projects/post-quantum-cryptography/round-2-submissions}.} 

While almost all of the ongoing research in isogeny-based cryptography is devoted to elliptic curves, there is a general awareness that many proposals 
should generalize to principally polarized abelian varieties (e.g., jacobians) of arbitrary dimension. This particularly applies to the supersingular isogeny walks on which SIDH and Charles, Goren and Lauter's hash function are based. In fact, in a follow-up paper~\cite[\S6.2]{CGLgenusg} the latter authors already hint at the possibility of a higher-dimensional analogue of their hash function.  Last year Takashima~\cite[\S4.2]{takashima} made the concrete proposal of using jacobians of supersingular genus-$2$ curves and their $15$ outgoing $(2,2)$-isogenies, which can be evaluated efficiently through Richelot's formulas. By disallowing backtracking
 he uses this to process one base-$14$ digit for each isogeny evaluation.
 Moreover he provides specific starting curves, such as $y^2 = x^5 + 1$ over $\FF_p$ with $p \equiv 4 \bmod 5$, which allow for all computations to be done over $\FF_{p^2}$, as was shown by himself and Yoshida about a decade ago~\cite{richelotchaining}. Unfortunately Takashima's hash function is not collision-resistant due to the inherent presence of small cycles in the resulting isogeny graph, as was pointed out very recently by Flynn and Ti~\cite{flynnti}, who then proceeded with studying a genus-$2$ variant of SIDH. 
 
The contributions of this paper are as follows. First, in Section~\ref{sec:ss_vs_ss} we argue that the full supersingular isogeny graph is the wrong arena for higher-dimensional analogues of Charles, Goren and Lauter's hash function, and promote the use of superspecial subgraphs. In doing so we give a natural explanation for why Takashima and Yoshida's starting curve  indeed allows for all subsequent arithmetic to be carried out in $\FF_{p^2}$. Second, some first properties of the $(2,2)$-isogeny graph of superspecial principally polarized abelian surfaces are gathered and proved in Section~\ref{sec:22graph} and Appendix~\ref{sec:grobnerproof}. Third and foremost, 
we repair Takashima's hash function by showing that an extremely simple restriction (which still allows us to process one base-$8$ digit, i.e., $3$ bits per isogeny)
both prevents the Flynn--Ti attack and simplifies the reasoning on security; we also show that with high probability, the starting curve $y^2 = (x^2-1)(x^2-2x)(x-1/2)$ over $\FF_p$ with $p \equiv 5 \bmod 6$ naturally avoids running into products of elliptic curves, which as we will see are technical nuisances. The details can be found in Section~\ref{sec:extensions} and Section~\ref{sec:hash}. In Sections~\ref{section:implementation_and_timings} and~\ref{section:remark_improvement} we report on an implementation in Magma and compare its performance with the elliptic curve case of Charles, Goren and Lauter.

\subsection*{Why generalizing?} Besides scientific curiosity, we see a number of motivations for investigating higher-dimensional isogeny-based cryptography:
\begin{enumerate}
    \item There seem to exist some beneficial trade-offs between the larger computational cost of each isogeny evaluation and features such as larger graph sizes, higher numbers of outgoing isogenies, or arithmetic in smaller finite fields. As an illustration of this, we note that in Charles, Goren and Lauter's hash function one needs to compute one square root for each digested bit, while our proposal uses $3$ square roots per $3$ bits, which seems like no improvement at all, except that our square roots are to be extracted in finite fields of about one third of the bit size and can be handled in parallel. See Section~\ref{section:remark_improvement} for some further comments on this.
 \item The fact that higher-dimensional abelian varieties have torsion subgroups of larger rank may allow for a symmetric set-up of SIDH in which Alice and Bob sample their secrets from the same space (but this is not touched upon in the current paper).
\end{enumerate}



\section{
    Supersingular versus superspecial
}
\label{sec:ss_vs_ss}

One apparent point of concern is that in the case of elliptic curves over a finite field of characteristic $p$, supersingularity has many equivalent characterizations whose natural generalizations to higher dimension become distinct notions. 
For instance, one such characterization reads that the trace $t$ of Frobenius satisfies $t \equiv 0 \bmod p$,
which naturally generalizes to the notion of \emph{superspeciality}.\footnote{In the case of jacobians, superspeciality amounts to the Hasse-Witt matrix $M \in \FF_p^{g \times g}$ being zero, where we note that $M \equiv t \bmod p$ when $g = 1$. For arbitrary abelian varieties $A$ being superspecial means that Frobenius acts as the zero map on $H^1(A, \mathcal{O}_A)$.} An alternative characterization states that the Newton polygon is a straight line segment with slope $1/2$; this property makes sense in arbitrary dimension where it is still called \emph{supersingularity}, but in dimension $g \geq 2$ this is a weaker condition than superspeciality. A third characterization is that there exists no non-trivial $p$-torsion. This also makes sense in arbitrary dimension but in dimension $g \geq 3$ it weakens the notion of supersingularity. A curve is called superspecial or supersingular if its accompanying jacobian is superspecial or supersingular respectively.

We refer to Brock's thesis~\cite{brockthesis} and the references therein for some general facts on supersingularity and superspeciality. Most notably, it can be shown that an abelian variety is supersingular if and only if it is isogenous to a product of supersingular elliptic curves, while it is superspecial if and only if it is isomorphic to such a product; moreover in dimension $g \geq 2$ all such products are isomorphic to each other, see e.g.~\cite[Thm.\,2.1A]{brockthesis} or~\cite[p.\,13]{lioort}. Here, isogenous and isomorphic should be understood in the context of abstract abelian varieties, regardless of the principal polarization with which they may come equipped: statements like `all superspecial curves of genus $g \geq 2$ have isomorphic jacobians' are of course not true in general (see Theorem~\ref{size} below for a precise count in case $g=2$).
We will abbreviate principal polarization to p.p.\ from now on and will also assume that a product of elliptic curves always comes with the product polarization, unless stated otherwise.

We believe that the full graph of supersingular p.p.\ abelian varieties is the wrong context in which to study Charles--Goren--Lauter
hash functions in dimension $g \geq 2$. Instead we argue for use of the superspecial
subgraph. Indeed, the family of supersingular p.p.\ abelian varieties over $\FFbar_p$ is
infinite-dimensional, whereas the superspecial subfamily is
$0$-dimensional. The latter implies that there is only a finite number of them and, furthermore, they all admit a model over $\FF_{p^2}$ whose Frobenius endomorphism has characteristic polynomial $\chi(t) = (t \pm p)^{2g}$, in particular it acts 
 as multiplication by $\pm p$; see \cite{ibukiyama}. Assuming that $p$ is odd, this implies that all $2$-torsion is $\FF_{p^2}$-rational, hence so are all $(2,2, \ldots, 2)$-isogenies and their codomains. By~\cite[Lem.\,2.2A]{brockthesis} these are again superspecial p.p.\ abelian varieties whose Frobenius has the same characteristic polynomial, so the argument repeats and we conclude that the full superspecial $(2,2, \ldots, 2)$-isogeny graph is defined over $\FF_{p^2}$. This explains the aforementioned observation by Takashima and Yoshida, whose starting curves are indeed superspecial. See~\cite{oort}, where several more examples of superspecial genus-$2$ curves over $\FF_p$ can be found. These include 
$y^2 = x^5-x$ which is superspecial if and only if $p\equiv 5$ or $7\bmod 8$, and $y^2 = (x^2-1)(x^2-2x)(x-1/2)$ which is superspecial if and only if $p \equiv 5 \bmod 6$. In characteristics $2$ and $3$ superspecial genus-$2$ curves do not exist. In general it seems unknown how to write down the equation of a random superspecial genus-$2$ curve.
 
 Note that superspecial p.p.\ abelian varieties were also considered in Charles, Goren and Lauter's follow-up paper~\cite{CGLgenusg}, albeit in a more theoretical context and using different edge and vertex sets for the associated graphs.

\section{
    Further preliminaries
}

\subsection{Hyperelliptic curves of genus-2}

Let $K$ be a field of characteristic $p>5$. A \textit{(hyperelliptic) curve of genus-2 over $K$} is an algebraic curve defined by an equation of the form $y^2 = f(x)$, where $f(x)\in K[x]$ is a squarefree polynomial of degree 5 or 6. Up to $\overline K$-isormorphism, any genus-2 curve has a representation with a monic polynomial of degree 6 and we will mostly work with these representations since it eases up the notation quite a bit. All formulas provided still work with a degree 5 polynomial if one sees the missing linear factor as `$0\cdot x+1$'. A genus-2 curve is determined (up to $\overline K$-isomorphism) by its Cardona--Quer invariants. 
The specific formulas for these invariants are discussed in \cite{g2invars}, but for our purposes it suffices to know that they consist of an ordered triple $(j_1,j_2,j_3)\in K^3$.

\subsection{Richelot isogenies}

A \textit{Richelot isogeny} is a $(2,2)$-isogeny between jacobians of
genus-2 curves, i.e. the kernel of the isogeny is a group isomorphic to
$\ZZ/2\ZZ\oplus\ZZ/2\ZZ$ that is maximal isotropic with regards to the
\(2\)-Weil pairing. The $2$-torsion of the jacobian of the genus-2 curve $C:
y^2 = f(x) = \prod_{i=1}^6(x-\alpha_i)$ is $\{0\} \cup \{
    [(\alpha_i,0)-(\alpha_j,0)] : i < j \}$, where the square brackets denote linear equivalence classes of divisors. A subgroup of the
2-torsion being maximal isotropic with regards to the \(2\)-Weil pairing in this context simply means that the group contains exactly 3 non-trivial elements such that all $\alpha_i$, $1\leq i\leq 6$, occur exactly once in all the representations combined. Hence the Richelot isogenies can be represented by sets of quadratic factors of $f(x)$ that are pairwise coprime. More precisely, if we define
\[
\begin{cases}
G_1 = g_{1,3}x^2 + g_{1,2}x + g_{1,1} = (x-\alpha_1)(x-\alpha_2),\\
G_2 = g_{2,3}x^2 + g_{2,2}x + g_{2,1} = (x-\alpha_3)(x-\alpha_4),\\
G_3 = g_{3,3}x^2 + g_{3,2}x + g_{3,1} = (x-\alpha_5)(x-\alpha_6),
\end{cases}
\]
then the $(2,2)$-isogeny with kernel $\{0, [(\alpha_1,0)-(\alpha_2,0)]$, $[(\alpha_3,0)-(\alpha_4,0)], [(\alpha_5,0)-(\alpha_6,0)]\}$ can be identified by the \textit{quadratic splitting} $\{G_1, G_2, G_3\}$.
While the above equalities force our quadratic factors to be monic, i.e.\ $g_{i,3}=1$ for all $i$, we incorporate the leading coefficients in our discussion for the sake of generality (e.g., to cope with the degree $5$ case where one of the $g_{i,3}$'s becomes zero).
In any case quadratic splittings are only identified up to permutation
and constant multiples of the three quadratics.  

There are 15 possible ways of organizing the roots $\alpha_i$ into
distinct quadratic splittings. It is possible that the resulting
quadratics
are only defined over an extension of the field over which our curve $C$
is defined, in which case both the corresponding $(2,2)$-isogeny and its codomain also
might be defined over this field extension.  Nevertheless,
if the splitting is fixed by Frobenius \emph{as a set}, then the isogeny
and codomain are defined over the ground field.
As mentioned in Section~\ref{sec:ss_vs_ss}, in the case of superspecial
p.p.\ abelian surfaces, all domains, kernels, $(2,2)$-isogenies and
associated codomains are defined over $\FF_{p^2}$ up to isomorphism.

\begin{proposition}
    \label{prop:Richelot}
    Let \(C: y^2 = G_1(x)\cdot G_2(x)\cdot G_3(x)\)
    be a genus-2 curve,
    with \(\{G_1, G_2, G_3\}\) the quadratic splitting
    associated with a maximal 2-Weil-isotropic subgroup
    \(S \subset J_C[2]\),
    and let \(\phi: J_C \to A \cong J_C/S\)
    be the quotient \((2,2)\)-isogeny.
    Following the notation above, let
    \[
        \delta 
        := 
        \det
        \begin{pmatrix}
            g_{1,3} & g_{1,2} & g_{1,1}\\
            g_{2,3} & g_{2,2} & g_{2,1}\\
            g_{3,3} & g_{3,2} & g_{3,1}
        \end{pmatrix}
        \,.
    \]
    \begin{enumerate}
        \item
            \underline{If \(\delta \not= 0\)}, then
            \(A\) is isomorphic to the jacobian of the genus-2 curve
            \[
                C' : y^2 = \delta^{-1} H_1(x)\cdot H_2(x)\cdot H_3(x)
            \]
            where
            \begin{align*}
                \hspace{1.2cm}
                H_1 & := G'_2G_3-G_2G'_3
                \,,
                &
                H_2 & := G'_3G_1-G_3G'_1
                \,,
                &
                H_3 & := G'_1G_2-G_1G'_2
                \,,
            \end{align*}
            where \(G_i'\) is the derivative of \(G_i\) with respect to \(x\).
            Moreover,
            \(\{H_1,H_2,H_3\}\)
            is a quadratic splitting corresponding 
            to the dual isogeny \(\hat\phi: J_{C'} \to J_C\).
        \item
            \underline{If \(\delta = 0\)}, then
            \(A\) is isomorphic to 
            a product of elliptic curves $E_1\times E_2$. 
            The vanishing of the determinant \(\delta\)
            implies that 
            there exist \(s_1\) and \(s_2\) in \(\FF_{p^2}\)
            such that 
            \[
                G_i = a_{i,1}(x-s_1)^2+a_{i,2}(x-s_2)^2
            \]
            for some \(a_{i,1}\) and \(a_{i,2}\) in \(\FF_{p^2}\)
            for \(i = 1, 2, 3\).
            The elliptic curves forming the product isomorphic to \(A\)
            can be defined by the equations
            \begin{align*}
                E_1 : y^2 & = \prod_{i=1}^3 (a_{i,1}x+a_{i,2})
                \,,
                &
                E_2 : y^2 & = \prod_{i=1}^3 (a_{i,1}+a_{i,2}x)
                \,,
            \end{align*}
            and the isogeny \(\phi\) is induced by \(\phi_1\times\phi_2\),
            where \(\phi_1: C \to E_1\)
            is \((x,y) \mapsto ((x - s_1)^2/(x - s_2)^2,y/(x - s_2)^3)\)
            and \(\phi_1: C \to E_2\)
            is \((x,y) \mapsto ((x - s_2)^2/(x - s_1)^2,y/(x - s_1)^3)\).
    \end{enumerate}
\end{proposition}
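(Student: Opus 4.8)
The plan is to split according to whether the principally polarized quotient $A = J_C/S$ (principal because $S$ is maximal $2$-Weil-isotropic) is the Jacobian of a smooth genus-$2$ curve or a product of two elliptic curves with the product polarization --- by the classification of principally polarized abelian surfaces these are the only two possibilities --- and in each case to write down the stated model explicitly. The two polynomial identities that will do the work are the syzygy $G_1H_1+G_2H_2+G_3H_3=0$ (valid because the cubic leading terms of the $H_i$ cancel, so each $H_i$ has degree at most $2$), and the observation that the coefficient matrix of $(H_1,H_2,H_3)$ is obtained from the cofactor matrix of $(g_{i,j})$ by reversing the column order and rescaling the middle column; the latter gives $\det(\text{coeffs of }H_i)=2\delta^2$ and, on iterating, shows that the cross-Wronskians of $(H_1,H_2,H_3)$ recover $(G_1,G_2,G_3)$ up to an explicit nonzero scalar. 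Together with the classical resultant identities linking $\operatorname{Res}(H_i,H_j)$ to $\delta$ and the $\operatorname{Res}(G_i,G_j)$, this shows that $\delta\neq 0$ is \emph{equivalent} to $\{H_1,H_2,H_3\}$ again being a quadratic splitting (i.e.\ $H_1H_2H_3$ squarefree of degree $5$ or $6$ with pairwise coprime factors), so that the alternative $\delta\neq 0$ versus $\delta=0$ is exactly the Jacobian/product alternative.

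When $\delta\neq 0$ I would invoke the classical Richelot correspondence (Richelot; Bost--Mestre; see also Cassels--Flynn and Smith's thesis): put $C' : y^2=\delta^{-1}H_1H_2H_3$ and let $\mathcal{R}\subset C\times C'$ be the curve defined by the bidegree-$(2,2)$ relation $\sum_i G_i(x_1)H_i(x_2)=0$ --- note this is \emph{not} identically zero as a polynomial in the two variables, so there is no conflict with the syzygy --- together with the relation on the $y$-coordinates that the syzygy supplies. One checks that $\mathcal{R}$ is a $(2,2)$-correspondence and that on divisor classes it induces homomorphisms $\mathcal{R}_*\colon J_C\to J_{C'}$, $\mathcal{R}^*\colon J_{C'}\to J_C$ with $\mathcal{R}^*\mathcal{R}_*=[2]_{J_C}$ and $\mathcal{R}_*\mathcal{R}^*=[2]_{J_{C'}}$, so $\phi:=\mathcal{R}_*$ is a $(2,2)$-isogeny. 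Evaluating $\mathcal{R}$ at the Weierstrass points $(\alpha_i,0)$ of $C$ (where it degenerates) puts the classes $[(\alpha_1,0)-(\alpha_2,0)]$, $[(\alpha_3,0)-(\alpha_4,0)]$, $[(\alpha_5,0)-(\alpha_6,0)]$ into $\ker\phi$, hence $\ker\phi=S$ and $A\cong J_{C'}$. Re-running the construction on $C'$ with the splitting $\{H_1,H_2,H_3\}$ returns $C$ up to the harmless isomorphism $y\mapsto y/2$ --- this is the iterated cross-Wronskian identity above --- and the two isogenies compose to $[2]$, so $\{H_1,H_2,H_3\}$ is the splitting corresponding to $\hat\phi$.

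When $\delta=0$ the construction is self-contained. The vanishing of $\delta$ means $G_1,G_2,G_3$ span a subspace $V$ of dimension at most $2$ of the space of quadratics; the smooth conic of perfect squares in $\mathbb{P}^2$ meets $\mathbb{P}(V)$ in two points which, since $C$ is smooth, are distinct and give squares $(x-s_1)^2$, $(x-s_2)^2$ spanning $V$ --- were they equal, or were some $G_i$ a square, $\prod G_i$ would fail to be squarefree. This gives $G_i=a_{i,1}(x-s_1)^2+a_{i,2}(x-s_2)^2$ with all $a_{i,j}\neq 0$ and with the rows $(a_{i,1},a_{i,2})$ pairwise non-proportional. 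Setting $w=(x-s_1)^2/(x-s_2)^2$ one computes $\prod_i G_i(x)=(x-s_2)^6\prod_i(a_{i,1}w+a_{i,2})$, so $(x,y)\mapsto\bigl(w,\,y/(x-s_2)^3\bigr)$ is a degree-$2$ morphism $\phi_1\colon C\to E_1$ onto the smooth cubic $E_1 : y^2=\prod_i(a_{i,1}x+a_{i,2})$, and symmetrically for $\phi_2\colon C\to E_2$. For $\phi:=(\phi_{1*},\phi_{2*})\colon J_C\to E_1\times E_2$ one has $S\subseteq\ker\phi$ because the two roots $\alpha_{2k-1},\alpha_{2k}$ of $G_k$ both satisfy $(x-s_1)^2/(x-s_2)^2=-a_{k,2}/a_{k,1}$ and so are identified by $\phi_1$ (and by $\phi_2$). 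Conversely, a short M\"obius computation shows the deck involutions $\sigma_1,\sigma_2$ of $\phi_1,\phi_2$ act on the $x$-line by the same map $x\mapsto\bigl((s_1+s_2)x-2s_1s_2\bigr)/\bigl(2x-(s_1+s_2)\bigr)$ and satisfy $\sigma_2=\iota\circ\sigma_1$ with $\iota$ the hyperelliptic involution; from $\phi_i^*\phi_{i*}=1+\sigma_i^*$ and $\iota^*=-1$ on $J_C$ one gets $(\phi_1^*+\phi_2^*)\circ\phi=[2]_{J_C}$, whence $\ker\phi\subseteq J_C[2]$; and since the values $-a_{k,2}/a_{k,1}$ are pairwise distinct, $\ker\phi_{1*}\cap J_C[2]=S$ already. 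Hence $\ker\phi=S$, $\deg\phi=4$, and $\phi$ factors as an isomorphism $J_C/S\to E_1\times E_2$; it carries the induced principal polarization to the product polarization because $\phi_1^*E_1$ and $\phi_2^*E_2$ are orthogonal complements in $J_C$, which one reads off from $\phi_{1*}\phi_2^*=0=\phi_{2*}\phi_1^*$ (each vanishes since $(w,\pm Y)$ are inverse points on the corresponding elliptic curve).

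The main obstacle in the case $\delta\neq 0$ is the bookkeeping that makes $\mathcal{R}$ a $(2,2)$-\emph{isogeny} with kernel exactly $S$ and $\{H_1,H_2,H_3\}$ the splitting of the dual: this rests on the web of polynomial and resultant identities sketched above together with a careful analysis of the $y$-relation on $\mathcal{R}$. In the case $\delta=0$ the delicate point is the exact equality $\ker\phi=S$ (rather than merely $S\subseteq\ker\phi$), i.e.\ the degree count $\deg\phi=4$, and checking that the genuinely degenerate configurations are excluded by the smoothness of $C$.
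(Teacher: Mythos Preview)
The paper does not actually prove this proposition: it simply states ``For a proof of this proposition and a more in-depth discussion about Richelot isogenies, see \cite[Chapter~8]{smiththesis}.'' So there is nothing substantive to compare against; your sketch supplies far more than the paper does, and for part~(1) it follows the same classical Richelot-correspondence approach that the cited reference develops, while for part~(2) your self-contained argument via the conic of perfect squares and the deck-involution identity $\sigma_2=\iota\circ\sigma_1$ is correct and pleasant.

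One genuine slip in part~(1): the relation $\sum_i G_i(x_1)H_i(x_2)=0$ that you write down for the correspondence $\mathcal{R}$ is \emph{reducible} in the $x$-variables. Precisely because of the one-variable syzygy $\sum_i G_i H_i\equiv 0$ you invoke, the two-variable polynomial $\sum_i G_i(x_1)H_i(x_2)$ vanishes on the diagonal and hence factors as $(x_1-x_2)$ times a bilinear form; so as written it cuts out the diagonal together with a $(1,1)$-curve, not an irreducible bidegree-$(2,2)$ locus. The standard presentations (Bost--Mestre, Cassels--Flynn, Smith) use a different pair of equations for $\mathcal{R}$, and one really does need those to get the degree count and the kernel identification to come out right. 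Since you already defer the bookkeeping to the references and flag it as the main obstacle, this is a cosmetic error in the sketch rather than a structural one, but you should replace that equation if you want the displayed correspondence to be the right object. Everything else---the cross-Wronskian duality, the resultant criterion for $\delta\neq0$ being equivalent to $\{H_1,H_2,H_3\}$ again being a splitting, and the whole of part~(2) including the verification that $\ker\phi=S$ exactly---checks out.
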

For a proof of this proposition and a more in-depth discussion about Richelot isogenies, see
\cite[Chapter~8]{smiththesis}.

\subsection{(2,2)-isogenies from products of elliptic curves}

Consider the p.p.\ abelian surface $E_1 \times E_2$ given by the equations
\begin{align*}
                E_1 : y^2 & = \prod_{i=1}^3 (x-\alpha_i)
                \,,
                &
                E_2 : y^2 & = \prod_{i=1}^3 (x-\beta_i)
                \,.
\end{align*}
Just as in the case of jacobians of genus-2 curves, there are 15 outgoing $(2,2)$-isogenies with domain $E_1\times E_2$. Of these, 9 correspond to an isogeny that is the product of $2$-isogenies on the respective elliptic curves, such that the image of this isogeny is again simply a product of elliptic curves. The other 6 determine an isogeny where the kernel is given by
\[
\kappa = \{(\mathcal O_{E_1}, \mathcal O_{E_2}), (P_1, Q_{\sigma(1)}), (P_2, Q_{\sigma(2)}), (P_3, Q_{\sigma(3)}) \},
\]
with $\mathcal O_{E_1}$ and $\mathcal O_{E_2}$ the neutral element of $E_1$, respectively $E_2$, $\sigma$ a permutation of $\{1,2,3\}$, and $P_i = (\alpha_i,0)$, $Q_i = (\beta_i, 0)$. 
As long as $\kappa$ is not the restriction of the graph of an isomorphism $E_1\rightarrow E_2$, the image of the isogeny determined by $\kappa$ is the jacobian of a genus-2 curve which can be constructed as follows. Define $\Delta_\alpha$ and $\Delta_\beta$ as the discriminants of the monic cubic polynomials $\prod_{i=1}^3 (x-\alpha_i)$ and $\prod_{i=1}^3 (x-\beta_i)$ respectively, and
\begin{align*}
& a_1  = (\alpha_3-\alpha_2)^2/(\beta_3-\beta_2) + (\alpha_2-\alpha_1)^2/(\beta_2-\beta_1) + (\alpha_1-\alpha_3)^2/(\beta_1-\beta_3), \\
& b_1  = (\beta_3-\beta_2)^2/(\alpha_3-\alpha_2) + (\beta_2-\beta_1)^2/(\alpha_2-\alpha_1) + (\beta_1-\beta_3)^2/(\alpha_1-\alpha_3), \\
& a_2  = \alpha_1(\beta_3-\beta_2) + \alpha_2(\beta_1-\beta_3) + \alpha_3(\beta_2-\beta_1), \\
& b_2  = \beta_1(\alpha_3-\alpha_2) + \beta_2(\alpha_1-\alpha_3) + \beta_3(\alpha_2-\alpha_1).
\end{align*}
It can be proved that $\Delta_\alpha, \Delta_\beta, a_1, b_1, a_2, b_2$ are all nonzero, such that $A = \Delta_\beta a_1/a_2$ and $B = \Delta_\alpha b_1/b_2$ are well defined and nonzero as well. With these notations in mind, the image of the $(2,2)$-isogeny with kernel $\kappa$ is the jacobian of the genus-2 curve given by the equation
\begin{multline*}\label{elltojac}
    y^2 = -\left(A(\alpha_2-\alpha_1)(\alpha_1-\alpha_3)x^2 + B(\beta_2-\beta_1)(\beta_1-\beta_3)\right)\\
    \cdot \left(A(\alpha_3-\alpha_2)(\alpha_2-\alpha_1)x^2 + B(\beta_3-\beta_2)(\beta_2-\beta_1)\right)\\
    \cdot \left(A(\alpha_1-\alpha_3)(\alpha_3-\alpha_2)x^2 + B(\beta_1-\beta_3)(\beta_3-\beta_2)\right).
\end{multline*}
The three factors on the right hand side constitute a quadratic splitting for the dual isogeny back to $E_1 \times E_2$; note in particular that these factors are multiples of each other so that the corresponding value of $\delta$ is indeed $0$.

If $E_1\cong E_2$ we will have strictly fewer than six $(2,2)$-isogenies
from $E_1\times E_2$ to the jacobian of a genus-2 curve. The exact
number in this case is given by the formula $6-\#\text{Aut}(E_1)/2$. If the $j$-invariant of $E_1$ is $0$ or $1728$ then this expression is 3 respectively 4 (under the assumption that $p>3$). In all other cases this expression is $5$ since the only automorphisms are $\pm1$.

The final case to consider is when we want to construct an isogeny with domain an abelian surface of the form $E_1\times E_2$, with $E_1\cong E_2$, and of which the kernel $\kappa$ is the restriction of the graph of an isomorphism $\alpha: E_1\rightarrow E_2$. The codomain is then the same as the domain and the $(2,2)$-isogeny is given by
\begin{eqnarray*}
\phi : & E_1\times E_2 & \rightarrow E_1\times E_2\\
 & (P,Q) & \mapsto (P+\hat\alpha(Q), -Q+\alpha(P)),
\end{eqnarray*}
which is clearly self-dual.

For a proof of these previous statements and a more in-depth discussion, see \cite{degenformulas}, \cite{kani} and \cite{bruin}.

\section{
    The superspecial \texorpdfstring{\((2,2)\)}{(2,2)}-isogeny graph
}
\label{sec:22graph}

For each prime \(p\),
we define a directed multigraph $\thegraph_p$ as
follows.\footnote{%
    Every \((2,2)\)-isogeny $\phi: A_1\to A_2$ has a unique dual 
    \((2,2)\)-isogeny $\hat\phi: A_2\to A_1$, so one might think that
    we could easily treat \(\thegraph_p\) as an undirected graph.
    Unfortunately, this may fail if $A_1$ has automorphisms different from $\pm 1$. Indeed, in that case it is possible that two non-isomorphic \((2,2)\)-isogenies 
    $\phi: A_1\to A_2$ and $\psi: A_1\to A_2$
    are obtained from each other by pre-composition with such an automorphism, 
    so that their duals are obtained from one another by \emph{post}-composition with this automorphism (more precisely if
    $\phi = \psi \circ \alpha$ then $\hat{\phi} = \alpha^{-1} \circ \hat{\psi}$). So these duals have the same kernel, hence they are isomorphic. 
    In the elliptic curve case, this technicality can be combated by 
    choosing $p\equiv 1\bmod 12$, since then the automorphisms of all
    curves are always $\pm 1$. In the case of superspecial genus-2 curves, however,
    no such convenient restriction exists:
    there are jacobians with a different number of automorphisms 
    for any prime $p$ \cite{oort}.
}
The vertices of $\thegraph_p$ represent the isomorphism classes of 
superspecial p.p.\ abelian surfaces defined over $\FFbar_p$.
The graph $\thegraph_p$ has an edge from vertex $A_1$ to vertex $A_2$
for every $(2,2)$-isogeny from the superspecial p.p.\ abelian surface
corresponding to $A_1$ to the one corresponding to $A_2$, again up to isomorphism. 
Here, isomorphisms of outgoing \((2,2)\)-isogenies are commutative diagrams
\[
    \begin{tikzcd}
        A_1 \arrow{r}{\phi} \arrow{dr}{\phi'} & A_2 \arrow{d}{\iota} \\
        & A_2' 
    \end{tikzcd}
\]
where \(\phi\) and \(\phi'\) are \((2,2)\)-isogenies 
and \(\iota\) is an isomorphism 
of superspecial p.p.\ abelian surfaces.
Since the isomorphism class of an outgoing isogeny is uniquely determined by its kernel, this simply means that we have an outgoing edge for each $(2,2)$-subgroup of $A_1$, i.e., each subgroup that is isomorphic to
$\ZZ/2\ZZ\oplus\ZZ/2\ZZ$ and maximal isotropic with regards to the
\(2\)-Weil pairing.

By construction, $\thegraph_p$ is a 15-regular (multi)graph,
since both types of superspecial p.p.\ abelian surfaces have 
15 different $(2,2)$-isogenies. 
One might simplify the situation by combining parallel edges
to turn $\thegraph_p$ into a simple directed graph, but 
for our application we will need to distinguish between all $15$ outgoing edges.
In any case, for large $p$ the number of parallel edges is expected to
be negligible relative to the size of the graph
(for very small \(p\), where there are few superspecial p.p.\ abelian surfaces,
the opposite holds---as we will see in \S\ref{sec:G13}).

The vertices of \(\thegraph_p\) fall into two classes:
\[
    V(\thegraph_p) = \thegraphell_p + \thegraphjac_p
    \,,
\]
where \(\thegraphell_p\) is the set of isomorphism classes corresponding to
products of supersingular elliptic curves, and \(\thegraphjac_p\) is the set of
isomorphism classes of superspecial genus-2 jacobians.
Theorem~\ref{size} gives us the cardinalities of these subsets.

\begin{theorem}\label{size}
    Let \(\thegraph_p\), \(\thegraphell_p\), and \(\thegraphjac_p\)
    be defined as above.
    \begin{itemize}
        \item
            If \(p = 2\) or \(3\),
            then \(\#\thegraphjac_p = 0\) and \(\#\thegraphell_p = 1\).
        \item
            If \(p = 5\),
            then \(\#\thegraphjac_p = 1\) and \(\#\thegraphell_p = 1\).
        \item
            If \(p > 5\), 
            then
            \begin{align*}
                \#\thegraphjac_p & = \frac{p^3+24p^2+141p-346}{2880} + \delta_p
                \intertext{and}
                \#\thegraphell_p & = 
                \frac{1}{2}\left( \frac{p-1}{12} + \epsilon_p  \right) \left( \frac{p-1}{12} + \epsilon_p +1 \right)
                \,,
            \end{align*}
            where $\delta_p\in [0,\frac{881}{720}]$ 
            depends only on $p\bmod{120}$
            and $\epsilon_p \in \left[0,\frac{7}{6} \right]$ 
            depends only on $p\bmod{12}$.
    \end{itemize}
\end{theorem}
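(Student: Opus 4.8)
The plan is to count the two vertex sets by exploiting the dichotomy established in Section~\ref{sec:ss_vs_ss}: every superspecial p.p.\ abelian surface over $\FFbar_p$ is either (a) a product $E_1 \times E_2$ of supersingular elliptic curves equipped with the product polarization, or (b) the jacobian of a superspecial genus-$2$ curve. The set $\thegraphell_p$ is therefore in bijection with unordered pairs $\{E_1, E_2\}$ of supersingular elliptic curves over $\FFbar_p$, \emph{but} one must be careful: distinct unordered pairs can give isomorphic \emph{polarized} surfaces only in degenerate situations, and conversely a product can sometimes also arise as a jacobian --- actually it cannot, since a jacobian is indecomposable as a p.p.\ abelian variety, so the two classes $\thegraphell_p$ and $\thegraphjac_p$ are genuinely disjoint. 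Thus $\#\thegraphell_p$ equals the number of unordered pairs (with repetition) from a set of size $S_p := \#\{\text{supersingular } j\text{-invariants in } \FFbar_p\}$, i.e.\ $\binom{S_p}{2} + S_p = \tfrac12 S_p(S_p+1)$. Plugging in the classical Eichler/Deuring mass formula $S_p = \frac{p-1}{12} + \epsilon_p$, with $\epsilon_p \in \{0, \tfrac{1}{3} \cdot 3, \dots\}$ the usual correction term depending on $p \bmod 12$ coming from the $j=0$ and $j=1728$ curves (so $\epsilon_p \in [0, \tfrac76]$), gives the stated formula for $\#\thegraphell_p$ directly. I would first check that \emph{every} unordered pair of supersingular elliptic curves yields a \emph{distinct} isomorphism class of p.p.\ surface --- this is where one invokes that a polarized product $E_1 \times E_2$ determines the unordered pair $\{E_1, E_2\}$ up to isomorphism (uniqueness of the decomposition of a p.p.\ abelian surface into polarized factors), a standard fact I would cite rather than reprove.

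For $\#\thegraphjac_p$ the strategy is to use the total mass. The Hashimoto--Ibukiyama mass formula gives the weighted count
\[
    \sum_{A \text{ superspecial p.p.}} \frac{1}{\#\mathrm{Aut}(A)} = \frac{(p-1)(p^2+p+1)}{N}
\]
for the appropriate constant $N$ (here $N = 2^9 \cdot 3^3 \cdot 5$, i.e.\ the mass is $\frac{(p-1)(p^2+p+1)}{25920}$ up to the exact normalization in Hashimoto--Ibukiyama), the sum running over all p.p.\ abelian surfaces in both classes with automorphisms taken in the p.p.\ category. I would then split this sum as $\sum_{\thegraphell_p} \frac{1}{\#\mathrm{Aut}} + \sum_{\thegraphjac_p} \frac{1}{\#\mathrm{Aut}}$. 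The first sum over products $E_1 \times E_2$ can be evaluated explicitly: $\#\mathrm{Aut}(E_1 \times E_2)$ in the polarized category is $\#\mathrm{Aut}(E_1) \cdot \#\mathrm{Aut}(E_2)$ when $E_1 \not\cong E_2$ and is $2 \cdot \#\mathrm{Aut}(E_1)^2$ (accounting for the swap) when $E_1 \cong E_2$, so this piece becomes a quadratic expression in the Eichler mass $\sum_E \frac{1}{\#\mathrm{Aut}(E)} = \frac{p-1}{24}$ together with correction terms for $j = 0, 1728$. Subtracting, the second sum $\sum_{\thegraphjac_p} \frac{1}{\#\mathrm{Aut}(J_C)}$ is a polynomial in $p$ of degree $3$ with coefficients depending on $p \bmod 120$ (the modulus $120 = 8 \cdot 3 \cdot 5$ arises because the possible automorphism groups of superspecial genus-$2$ jacobians --- classified by Igusa, with orders among $\{2, 4, 8, 10, 12, 24, 48\}$ and the like --- are governed by congruences mod $8$, $3$, and $5$).

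The final step converts the \emph{weighted} count $\sum_{\thegraphjac_p}\frac{1}{\#\mathrm{Aut}(J_C)}$ into the \emph{unweighted} count $\#\thegraphjac_p$. Since a generic superspecial genus-$2$ jacobian has $\#\mathrm{Aut}(J_C) = 2$ (only $\pm 1$), writing $\#\mathrm{Aut}(J_C) = 2$ for all but finitely many vertices gives $\#\thegraphjac_p = 2\sum_{\thegraphjac_p}\frac{1}{\#\mathrm{Aut}(J_C)} + \big(\text{corrections from the finitely many special } J_C\big)$, where the correction $2 \cdot \big(\frac12 - \frac{1}{\#\mathrm{Aut}(J_C)}\big)$ for each special curve depends only on $p \bmod 120$ through whether and how often those special automorphism groups occur; this is exactly the origin of the bounded term $\delta_p \in [0, \tfrac{881}{720}]$. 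Carrying the arithmetic through, one should land on $\#\thegraphjac_p = \frac{p^3 + 24p^2 + 141p - 346}{2880} + \delta_p$. The main obstacle, and the step I expect to be most delicate, is the bookkeeping of automorphism groups in the \emph{polarized} category --- both the precise list of which $\mathrm{Aut}(J_C)$ occur for superspecial curves (Igusa's classification, refined by the superspeciality constraint via the congruence class of $p$) and the subtle distinction, flagged in the paper's own footnote, between automorphisms of abstract abelian surfaces and automorphisms respecting the principal polarization; getting the correction terms $\delta_p$ and $\epsilon_p$ exactly right, rather than just their ranges, requires a careful case analysis mod $120$ that I would organize in a table. The small-$p$ cases $p \in \{2,3,5\}$ are handled by direct inspection, noting that superspecial genus-$2$ curves do not exist in characteristic $2$ or $3$ and that for $p = 5$ there is a unique such jacobian (the one with $\mathrm{Aut}$ of order $48$, from $y^2 = x^5 - x$ up to isomorphism) alongside the unique supersingular $j$-invariant.
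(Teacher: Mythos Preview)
Your argument for $\#\thegraphell_p$ is exactly the paper's: reduce to the classical count $S_p = (p-1)/12 + \epsilon_p$ of supersingular $j$-invariants and take $\binom{S_p+1}{2} = \tfrac12 S_p(S_p+1)$.

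For $\#\thegraphjac_p$ the paper does something much shorter: it simply cites the formula as an established result, pointing to Brock's thesis and to Ibukiyama--Katsura--Oort, and moves on. Your mass-formula-plus-automorphism-classification outline is essentially the method those cited references use, so you are sketching a derivation of the result the paper imports rather than reproducing the paper's own argument. The strategy is correct in principle: the Hashimoto--Ibukiyama total mass minus the (easily computed) mass of the product locus gives the jacobian mass, and Igusa's classification of genus-$2$ automorphism groups, refined by congruences on $p$, converts that weighted count into $\#\thegraphjac_p$. If you pursue this, tighten the numerics: $2^9 \cdot 3^3 \cdot 5 = 69120 \neq 25920$, and your list of possible automorphism-group orders is not quite complete (for instance $y^2 = x^5 - x$ in characteristic $5$ picks up extra automorphisms beyond the generic $48$). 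These are precisely the ``delicate bookkeeping'' issues you already flag, and pinning down the exact bound $\delta_p \le \tfrac{881}{720}$ really does require the full case analysis mod $120$ carried out in the cited sources.
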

\begin{proof}
    The values for \(\#\thegraphjac_p\) appear in 
    \cite[Theorem 3.10(b)]{brockthesis} or \cite[Theorem 3.3]{oort}.
    The formulas for \(\#\thegraphell_p\)
    follow from the fact that up to \(\FFbar_p\)-isomorphism,
    the number of supersingular elliptic curves over $\FFbar_p$
    is $(p-1)/12 + \epsilon_p$,
    where $\epsilon_p \in \left[0,\frac{7}{6} \right]$  depends only on $p\bmod{12}$
    (see for example \cite[Section V, Theorem 4.1(c)]{silverman2}).
\end{proof}

%
%

Theorem~\ref{size} implies that $\thegraph_p$ is a finite graph,
although this could already be derived from the fact that every
isomorphism class of superspecial p.p.\ abelian surfaces has a
representative defined over $\FF_{p^2}$. 
Asymptotically, we have
\begin{align*}
    \#\thegraph_p & = O(p^3) \,,
    &
    \#\thegraphell_p & = O(p^2) \,,
    &
    \#\thegraphjac_p & = O(p^3) \,.
\end{align*}
In particular, the
proportion of superspecial p.p.\ abelian surfaces that are the product
of 2 supersingular elliptic curves is $O(1/p)$ relative to the total
size of the graph: for $p$ large, the number of vertices
in \(\thegraph_p\) that are \textit{not} in \(\thegraphjac_p\) is negligible.

Informally, when $p$ is large, one could see \(\thegraphell_p\) as the `boundary'
of the graph $\thegraph_p$, and \(\thegraphjac_p\) as the `interior'. A first
reason is the size argument we just made. A second reason is the
connectivity of the 2 types of superspecial p.p.\ abelian surfaces that
we briefly touched on in the preliminaries. Indeed, every product of
elliptic curves has at least 9 out of 15 $(2,2)$-isogenies that have a
codomain that is a product of elliptic curves as well, hence this part
of our graph is very well connected while only making up a fraction of
our graph. Vice versa there is also no jacobian of a genus-2 curve that
could be `hiding' in between the products of elliptic curves, which we
can make precise with the following theorem.

\begin{theorem}
    \label{th:graph}
    \label{grobnertheorem}
    With the notation above:
    \begin{enumerate}
        \item
            Suppose \(p \not= 5\). If \(J\) is a vertex in \(\thegraphjac_p\subset\thegraph_p\),
            then (counting multiplicity) at most 6 of the 15 edges out of \(J\)
            are to vertices in \(\thegraphell_p\).
        \item
            If \(E\) is a vertex in \(\thegraphell_p\subset\thegraph_p\),
            then (counting multiplicity) at most 6 of the 15 edges out of \(E\)
            are to vertices in \(\thegraphjac_p\).
    \end{enumerate}
\end{theorem}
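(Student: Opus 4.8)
The plan is to handle part (2) immediately from the structure of $(2,2)$-isogenies out of a product, and to reduce part (1) to a finite group-theoretic count.

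\emph{Part (2).} Let $E = E_1\times E_2$ be a vertex of $\thegraphell_p$. As recalled above, $9$ of the $15$ outgoing $(2,2)$-isogenies are products of two $2$-isogenies $E_i\to E_i'$; their codomain $E_1'\times E_2'$ is again a product of elliptic curves, which are supersingular because they are isogenous to supersingular curves, so this codomain is a vertex of $\thegraphell_p$. Hence at most $15-9=6$ of the edges out of $E$ can land in $\thegraphjac_p$, which is the claim. (If $E_1\cong E_2$ the bound only improves, since some of the remaining six ``diagonal'' kernels are graphs of isomorphisms $E_1\to E_2$ and lead back to $E_1\times E_2$.)

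\emph{Part (1).} Write $J=J_C$ for $C: y^2=f(x)$ with $f$ squarefree of degree $6$ and roots $\alpha_1,\dots,\alpha_6$, and let $\bar G\subset\mathrm{PGL}_2$ be the group of Möbius transformations permuting $\{\alpha_1,\dots,\alpha_6\}$ (that is, $\mathrm{Aut}(C)$ modulo the hyperelliptic involution). The $15$ edges out of $J$ are the $15$ quadratic splittings, i.e.\ the partitions of the $\alpha_i$ into three pairs, and by Proposition~\ref{prop:Richelot} such an edge lands in $\thegraphell_p$ exactly when the corresponding determinant $\delta$ vanishes. The first step is to reinterpret $\delta=0$: by part (2) of that proposition it means the three quadratics $G_i$ all lie in a pencil $\langle (x-s_1)^2,(x-s_2)^2\rangle$, which is equivalent to saying that the involution $m\in\mathrm{PGL}_2$ with fixed points $s_1,s_2$ interchanges the two roots inside each $G_i$. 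Such an $m$ permutes all six Weierstrass points and fixes none of them, so $m\in\bar G$; conversely every involution in $\bar G$ fixing none of the $\alpha_i$ pairs them into a quadratic splitting with $\delta=0$, and since a Möbius involution is determined by the matching it induces on six points, the two constructions are mutually inverse. Therefore, counting multiplicity, the number of edges out of $J$ into $\thegraphell_p$ equals the number of involutions in $\bar G$ fixing none of the $\alpha_i$, and it remains to bound this by $6$.

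For the bound I would invoke the known list of reduced automorphism groups of genus-$2$ curves (equivalently, the classification of finite subgroups of $\mathrm{PGL}_2(\FFbar_p)$ applied to the subgroup generated by these involutions). For $p\neq 5$ — and $p>3$ is automatic, since $\thegraphjac_p=\emptyset$ otherwise — one finds $\bar G$ is one of a short list in explicit actions on $\mathbb{P}^1$ (cyclic, Klein four, $S_3$, $C_5$, dihedral of order $12$, $S_4$, $A_5$), and a case-by-case count of fixed-point-free involutions in each never exceeds $6$; the extremal case is the octahedral group $\bar G\cong S_4$, realised by $y^2=x^5-x$, whose six branch points $\{0,\infty,\pm1,\pm i\}$ form the vertices of an octahedron and whose six half-turns about the edge-axes are exactly the fixed-point-free involutions (it is these six, not the three double-transpositions of $S_4\to S_3$, that fix no vertex). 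When $p=5$ this same curve is the genuine exception: its six branch points become the single orbit $\mathbb{P}^1(\FF_5)$ of $\bar G\cong\mathrm{PGL}_2(\FF_5)\cong S_5$, and the involutions of $\mathrm{PGL}_2(\FF_5)$ fixing no point of $\mathbb{P}^1(\FF_5)$ are precisely the $10$ lying outside $\mathrm{PSL}_2(\FF_5)$ — $10>6$ bad edges, which is exactly why, and only why, $p=5$ is excluded. An alternative route avoiding the characteristic-sensitive parts of the automorphism classification is computational: normalise three branch points to $0,1,\infty$, treat the other three as indeterminates, write the fifteen $\delta$'s as polynomials over $\ZZ$, and check by a Gröbner-basis computation that fixing any seven of them to zero — after saturating by the discriminant to keep the six points distinct — forces the unit ideal over $\ZZ[1/N]$, then read off that $N$ has no prime factor other than $5$.

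\emph{Main obstacle.} The reduction is short; the work is in the final bound. On the classification side one must be sure the list of reduced automorphism groups is complete and that the $\mathrm{PGL}_2$-actions on the branch points are correctly described in every characteristic $\geq 7$ — the $p=5$ phenomenon shows this is not automatic — and then count fixed-point-free involutions without slips, the octahedral count being the delicate one. On the computational side the obstacle is the $\binom{15}{7}$ choices of ``which seven splittings are decomposable'': one must first exploit the $S_6$-action on the fifteen splittings to reduce to a handful of orbit representatives before the Gröbner computation becomes tractable.
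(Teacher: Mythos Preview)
Your argument for part~(2) is the same as the paper's. For part~(1) the paper takes exactly the computational route you sketch as an alternative at the end: it treats the six roots $\alpha_i$ as indeterminates over $\QQ$, normalises by $\prod_{i<j}(\alpha_i-\alpha_j)=1$, writes down the fifteen $\delta$-polynomials, and for each of the $\binom{15}{7}=6435$ seven-element subsets verifies via a Gr\"obner-basis computation in Magma that the ideal they generate together with the discriminant condition is the unit ideal. Tracking the primes in the denominators of the resulting representation of $1$ yields $\{2,3,5,7,11\}$; the computation is then rerun over $\FF_p$ for $p=3,7,11$ and still gives the unit ideal, while $p=5$ genuinely fails, matching the explicit ten-splitting count for $y^2=x^5-x$. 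No $S_6$-symmetry reduction is performed---the $6435$ cases are simply looped over, and this is fast enough---so the obstacle you flag on the computational side is not one in practice.

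Your primary route, through fixed-point-free involutions in the reduced automorphism group $\bar G\subset\mathrm{PGL}_2$, is genuinely different and is not in the paper. The bijection between $\delta=0$ splittings and such involutions is a clean observation: it turns the bound $\le 6$ into the Bolza-type classification of $\bar G$, with the octahedral $S_4$ of $y^2=x^5-x$ realising equality via its six edge-axis half-turns, and the enlargement to $\bar G\cong\mathrm{PGL}_2(\FF_5)$ in characteristic~$5$ explaining precisely where the ten exceptional splittings come from. This buys a conceptual explanation of both the bound and the exception, at the cost of importing the positive-characteristic classification (which, as you say, needs care). One small slip: $A_5$ does not occur as a reduced automorphism group of a genus-$2$ curve for $p\neq 5$, since in its icosahedral embedding in $\mathrm{PGL}_2$ the smallest orbit on $\mathbb{P}^1$ has size $12$; its presence in your list is therefore harmless but superfluous.
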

\begin{proof} 
The second part of this theorem was mentioned in the preliminaries and it follows from the fact that 9 out of 15 $(2,2)$-isogenies are simply a product of $2$-isogenies on the separate elliptic curve factors. A proof of a more general formula can be found in \cite{kani}. For a proof of the first part, see Appendix~\ref{sec:grobnerproof}.
\end{proof}

A simple counting argument then tells us that for sufficiently large $p$,
the chance of a vertex in \(\thegraphjac_p\) having a neighbour in \(\thegraphell_p\) in
our graph $\thegraph_p$ is negligible. Intuitively this makes sense,
since the $\delta$ in Proposition \ref{prop:Richelot} is the
determinant of a seemingly random $3\times3$ matrix for large $p$, and
will therefore almost surely be nonzero.

We now state a pair of conjectures
inspired by analogous theorems for the elliptic supersingular
\(2\)-isogeny graph.

\begin{conjecture}
    \label{conj:connected}
    The graph $\thegraph_p$ is connected.
\end{conjecture}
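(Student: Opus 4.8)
The plan is to mimic the elliptic-curve case and reduce connectivity to arithmetic in the quaternion algebra $B = B_{p,\infty}$. By the quaternion hermitian description of superspecial p.p.\ abelian surfaces (cf.\ \cite{ibukiyama}, \cite{brockthesis}), the vertex set $V(\thegraph_p)$ is essentially the class set of the principal genus of rank-$2$ definite quaternion hermitian lattices over a maximal order of $B$, i.e.\ a double coset space $G(\Q)\backslash G(\mathbb{A}_f)/K$ where $G = \mathrm{GU}_2(B)$ is the similitude group of such a form and $K$ an open compact subgroup; the derived group $G^{\mathrm{der}}$ is an inner form of $\mathrm{Sp}_4$ which, because $B$ is split at $2$, is split (hence noncompact) at the place $2$. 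The first step is to make this dictionary precise and, in particular, to identify the $15$ Richelot neighbours of a vertex with the terms of the Hecke double coset at $2$ attached to maximal isotropic $(\Z/2\Z)^2$-subgroups.

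Granting this, the second step is to apply strong approximation. Since $G^{\mathrm{der}}$ is simply connected and noncompact at $2$, strong approximation at $2$ shows that the class set becomes trivial once one is allowed to move freely at the place $2$; equivalently, the graph on $V(\thegraph_p)$ whose edges are \emph{all} lattice moves of Hecke type at $2$ is connected on each fibre of the similitude character. Connectivity of $\thegraph_p$ would then follow from two further points: (a) the relevant similitude/spinor-norm fibre is a single class, so no disconnection arises from the $\mathrm{GU}_2$-versus-$\mathrm{U}_2$ discrepancy; and (b) the sub-collection of moves coming from genuine $(2,2)$-subgroups -- rather than the full local Hecke algebra at $2$ -- already suffices to connect the graph. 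Point (b) is a combinatorial statement about the $2$-adic affine building of $\mathrm{PGSp}_4$: iterating the single type of move realised by Richelot isogenies should reach every vertex of the building, and this needs to be verified (it is the higher-rank analogue of the elementary fact that supersingular elliptic curves are connected under $\ell$-isogenies because $\ell$-isogenies generate the $\ell$-adic tree).

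A more concrete alternative, perhaps easier in our explicit genus-$2$ setting, is to split the claim: first show that $\thegraphell_p$ is connected internally, then that every vertex of $\thegraphjac_p$ is joined to $\thegraphell_p$. The first is easy: among the $15$ outgoing $(2,2)$-isogenies of $E_1\times E_2$, the $9$ product-type ones act by independent $2$-isogenies on the two factors, so connectivity of $\thegraphell_p$ reduces to connectivity of the (classical) supersingular elliptic $2$-isogeny graph. The second amounts to showing that from an arbitrary superspecial genus-$2$ jacobian one can, after finitely many Richelot steps, force the determinant $\delta$ of Proposition~\ref{prop:Richelot} to vanish. Heuristically this is all but certain -- $\delta$ behaves like a random element of $\FF_{p^2}$, and by Theorem~\ref{th:graph} several of the $15$ edges \emph{can} land in $\thegraphell_p$ -- but proving it is precisely the crux, essentially as hard as the conjecture itself.

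I expect the main obstacle to be exactly this: either the building-theoretic verification of (b), or, in the elementary approach, the reachability of $\thegraphell_p$ from an arbitrary jacobian. A secondary nuisance, already visible in the footnote defining $\thegraph_p$ and in Theorem~\ref{size}, is that vertices with automorphisms other than $\pm 1$ -- genus-$2$ jacobians with extra automorphisms, and products with $j$-invariant in $\{0,1728\}$ -- make the translation between ``$(2,2)$-isogenies up to isomorphism'' and ``Hecke double cosets'' slightly lossy, so one must check that edges are not destroyed in a way that could disconnect the graph. For large $p$ such vertices are negligible and this should be routine; the finitely many small primes, and any asymptotic thresholds hidden in the arguments above, would then have to be settled by direct computation.
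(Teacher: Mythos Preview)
The statement is labelled a \emph{conjecture} in the paper, and the paper offers no proof: immediately after stating it, the authors write that ``Conjecture~\ref{conj:connected} is the most natural, but will not necessarily be the one that we will base choices on. We mainly state it due to the analogy with the elliptic curve case.'' There is therefore nothing in the paper to compare your argument against.

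Your proposal is, as you yourself label it, a strategy rather than a proof, and you correctly locate the gaps. The strong-approximation route via $\mathrm{GU}_2(B_{p,\infty})$ is the natural framework, and your identification of obstacle~(b) --- that one must verify the single Hecke move realised by $(2,2)$-subgroups already suffices to traverse the $2$-adic building of $\mathrm{PGSp}_4$ --- is accurate; this is exactly where the work lies, and it is not carried out here. Likewise, point~(a) on the similitude/spinor fibre is asserted rather than checked. Your alternative decomposition into ``$\thegraphell_p$ is internally connected'' plus ``every jacobian reaches $\thegraphell_p$'' is also correctly assessed: the first part reduces to the classical supersingular elliptic $2$-isogeny graph, but the second, as you concede, is no easier than the conjecture itself. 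In short, the proposal is an honest outline of how a proof might go, with the hard steps flagged but not done; it is not a proof, and the paper makes no claim that one exists.
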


Conjecture~\ref{conj:connected} is the most natural,
but will not necessarily be the one that we will base choices on.
We mainly state it due to the analogy with the elliptic curve case.

\begin{conjecture}
    \label{conj:jacobian-connected}
    The subgraph of \(\thegraph_p\) supported on \(\thegraphjac_p\)
    is connected.
\end{conjecture}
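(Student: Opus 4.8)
The plan is to reduce Conjecture~\ref{conj:jacobian-connected} to two ingredients: (i) connectivity of the ambient graph $\thegraph_p$, i.e.\ Conjecture~\ref{conj:connected}, and (ii) a mechanism for rerouting paths so that they stay out of the product locus $\thegraphell_p$. For (i) I would imitate Pizer's and Mestre's treatment of the elliptic supersingular graph: under the Ibukiyama--Katsura--Oort and Hashimoto--Ibukiyama description, the vertices of $\thegraph_p$ are the classes in a quaternionic genus of principally polarised Hermitian lattices over a maximal order in $B_{p,\infty}$, and the adjacency operator of $\thegraph_p$ is a Brandt-type matrix realising the $(2,2)$-Hecke operator on the associated space of algebraic modular forms for $\mathrm{GSp}_4$. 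Connectivity of $\thegraph_p$ then amounts to simplicity of the Perron eigenvalue $15$, which should follow from strong approximation for $\mathrm{Sp}_4$ together with bounds towards the Ramanujan conjecture for $\mathrm{GSp}_4$; see \cite{ibukiyama} and the references therein for the relevant quaternion Hermitian machinery.

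For step (ii), assume $\thegraph_p$ connected and fix $J, J' \in \thegraphjac_p$ together with a path between them in $\thegraph_p$. By Theorem~\ref{th:graph}(2), every product $E_1 \times E_2$ sends at least $9$ of its $15$ edges back into $\thegraphell_p$, so $\thegraphell_p$ is highly connected internally; using this I would first contract each maximal run of consecutive vertices in $\thegraphell_p$, reducing to the case where the path meets $\thegraphell_p$ only in isolated vertices, i.e.\ in length-$2$ segments $J_1 \to E_1 \times E_2 \to J_2$ with $J_1, J_2 \in \thegraphjac_p$. It then suffices to replace each such segment by a path lying entirely in $\thegraphjac_p$. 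Proposition~\ref{prop:Richelot}(2) and the explicit elliptic-to-jacobian formulas give, generically, six $(2,2)$-isogenies from $E_1 \times E_2$ to jacobians, of which $J_1$ and $J_2$ are two; so the claim reduces to showing that the (at most) six jacobian neighbours of a product always form a connected cluster within $\thegraphjac_p$, joined by bounded-length paths avoiding $\thegraphell_p$ — for instance by exhibiting a common neighbour in $\thegraphjac_p$, or by a direct Cardona--Quer-invariant computation with the formulas of Proposition~\ref{prop:Richelot} and \cite{kani}. Splicing over the finitely many products visited by the original path then yields a path from $J$ to $J'$ inside $\thegraphjac_p$.

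A softer, purely spectral alternative to (ii) would run as follows: if $\thegraph_p$ is a good enough expander, deleting the $O(p^2)$ vertices of $\thegraphell_p$ — an $O(1/p)$ fraction of $\thegraph_p$ by Theorem~\ref{size} — leaves a connected component of size $\#\thegraphjac_p - O(p^2)$, after which one would try to absorb the residual small components using Theorem~\ref{th:graph}(1), which guarantees that at most $6$ of the $15$ edges out of any jacobian are removed. The main obstacle, and the reason the statement is only a conjecture, is that neither half is routine. On the automorphic side, pinning down the eigenvalue $15$ needs a multiplicity-one statement whose proof must cope with the non-tempered CAP (Saito--Kurokawa) part of the automorphic spectrum for $\mathrm{GSp}_4$; those contributions never reach $15$ and so cannot disconnect $\thegraph_p$, but excluding a \emph{second} eigenvalue equal to $15$ is delicate, and here there is no clean congruence condition on $p$ to fall back on, since — as the footnote recalls — jacobians with automorphism group larger than $\{\pm 1\}$ occur for every $p$. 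On the combinatorial side, the spectral shortcut is not quite enough on its own: the edge-expansion constant of a $15$-regular Ramanujan graph is only about $(15-2\sqrt{14})/2 \approx 3.8 < 6$, so the naive expansion estimate does not preclude a small leftover component $T \subseteq \thegraphjac_p$ all of whose outgoing edges escape into $\thegraphell_p$, and closing this gap requires either a finer argument exploiting the internal connectivity of $\thegraphell_p$ or a strictly stronger eigenvalue bound; moreover the genuinely genus-$2$ local analysis of rerouting around $E_1 \times E_2$ — especially in the case $E_1 \cong E_2$, where the automorphism phenomena of the footnote cut down the number of jacobian neighbours — has to be done by hand, with the small primes $p$ verified computationally.
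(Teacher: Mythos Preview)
The statement you are addressing is Conjecture~\ref{conj:jacobian-connected}, which the paper explicitly leaves open: there is no proof in the paper to compare your attempt against. The paper only states the conjecture, observes that it implies Conjecture~\ref{conj:connected} (since every vertex in $\thegraphell_p$ has outgoing edges into $\thegraphjac_p$ for $p>3$), and reports a computational verification of the strictly stronger Conjecture~\ref{conj:cyclic-jacobian-connected} for $p \le 1013$.

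Your submission is accordingly not a proof but a plan of attack, and you say as much in the final paragraph. The strategy is reasonable and well-informed --- reduce to connectivity of $\thegraph_p$ via the quaternion-Hermitian/Brandt-matrix description and a spectral gap for the $(2,2)$-Hecke operator on $\mathrm{GSp}_4$, then handle the deletion of $\thegraphell_p$ either by local rerouting or by expansion --- but, as you yourself diagnose, neither half is complete. In particular: the multiplicity-one input for the Perron eigenvalue $15$ is genuinely delicate in genus~$2$ because of the CAP/Saito--Kurokawa part of the spectrum; the local rerouting step (``the jacobian neighbours of a product form a connected cluster inside $\thegraphjac_p$'') is asserted but not established, and the $E_1 \cong E_2$ case can cut the number of jacobian neighbours below what the argument needs; and your own edge-expansion calculation shows the spectral shortcut falls short of the required bound. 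So the proposal correctly identifies the obstacles but does not surmount them, and the statement remains a conjecture --- which is exactly its status in the paper.
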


Conjecture~\ref{conj:jacobian-connected} (which is identical
to Conjecture~\ref{conj:connected} in the elliptic case)
is more relevant to our discussion. 
It implies that \(\thegraphell_p\) not only makes no significant
contribution to the size of \(\thegraph_p\) as \(p\to\infty\),
but it is also not essential for connectivity.
(Thus, we consider \(\thegraphell_p\) to be the ``boundary'' of $\thegraph_p$.)
Conjecture~\ref{conj:jacobian-connected} implies
Conjecture~\ref{conj:connected},
since every vertex in \(\thegraphell_p\)
has at least 4 outgoing edges into \(\thegraphjac_p\) for $p>3$ 
(as mentioned in the preliminaries).
As a final note, one may wonder if all non-superspecial supersingular p.p.\ abelian surfaces also form a similar connected component (which is necessarily infinite). Given that we will not make use of these abelian surfaces, we will not explore that thought any further.

\section{
    The graph \texorpdfstring{$\thegraph_{13}$}{G13}
}
\label{sec:G13}

We now give a small example to show the possible case distinctions that
can occur in the graphs $\thegraph_p$. We take \(p = 13\), since this
yields a small graph that still exhibits most of the subtleties and
pathologies that one might encounter in larger graphs.

Figure \ref{fig:G13} shows~\(\thegraph_{13}\).
There are 3 superspecial genus-2 curves defined over $\FFbar_{13}$
up to isomorphism, say \(C_i\) for \(i\) in \(\{1,2,3\}\); we denote
their jacobians by $J_{C_i}$. There is only 1 supersingular elliptic
curve defined over $\FFbar_{13}$ up to isomorphism, say $E$, so there
is only one vertex in \(\thegraph_{13}\) that corresponds to 
a product of elliptic curves.

First of all it is easily verifiable that there are at most 6 outgoing
edges from any $J_{C_i}$ to $E\times E$, see Appendix~\ref{sec:grobnerproof}. Furthermore, since clearly
$E\cong E$, there are strictly fewer than 6 outgoing edges from $E\times E$ to jacobians of genus-2 curves. Since the $j$-invariant of $E$ is not in $\{0,1728\}$, we know there are exactly 5 like that, so the remaining 10 edges must go to products of elliptic curves as well, which here (by lack of other options) means a loop with multiplicity 10.

This example also shows clearly why direction is important in the graph.
There are 4 edges from $J_{C_1}$ to $J_{C_2}$, but only 1 edge back.
In other words $C_1 : y^2 = x^5-x$ has 4 quadratic
splittings whose associated Richelot isogenies 
have $J_{C_2}$ as codomain,\footnote{Up to isomorphism, that is: the resulting equations for the curve $C_2$ are in fact different, but the Cardona--Quer invariants are the same.} while starting from any Weierstra\ss\ equation for $C_2$, only one quadratic splitting gives rise to a Richelot isogeny with $J_{C_1}$ as codomain. 
This stems from the fact that the 4 corresponding $(2,2)$-subgroups of $J_{C_1}$ are mapped to each other by an automorphism of $J_{C_1}$. In other words the 4 resulting isogenies 
\[ \phi_1, \ldots, \phi_4 : J_{C_1} \rightarrow J_{C_2} \]
are obtained from one another by pre-composition with such an automorphism. But then their duals
\[ \hat{\phi_1}, \ldots, \hat{\phi_4} : J_{C_2} \rightarrow J_{C_1} \]
are obtained from each other by \emph{post}-composition with an automorphism. In particular they have the same kernel or, equivalently, they correspond to the same quadratic splitting.


The only thing missing from the graph is a vertex corresponding to a
product of non-isomorphic elliptic curves. Such a vertex always has
9 outgoing edges (possibly loops) to other vertices in \(\thegraphell_p\),
and 6 outgoing edges to vertices in \(\thegraphjac_p\).
The smallest example of this phenomenon is in the graph $\thegraph_{17}$, which
already has double the number of vertices of \(\thegraph_{13}\).

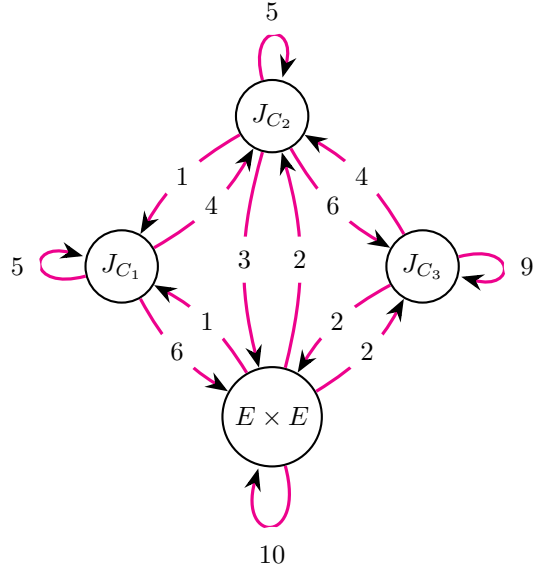
\begin{figure}
\centering
\begin{tikzpicture}
\begin{scope}[every node/.style={circle,thick,draw}]
    \node (A) at (0,0) {$J_{C_1}$};
    \node (B) at (2,2) {$J_{C_2}$};
    \node (C) at (4,0) {$J_{C_3}$};
    \node (D) at (2,-2) {$E\times E$};
\end{scope}

\begin{scope}[>={Stealth[black]},
			  every node/.style={fill=white,circle},
              every edge/.style={draw=magenta,very thick}]
\path [->] (A) edge [loop left] node {$5$} (A);
\path [->] (A) edge [bend right=15] node {$4$} (B);
\path [->] (A) edge [bend right=15] node {$6$} (D);

\path [->] (B) edge [bend right=15] node {$1$} (A);
\path [->] (B) edge [bend right=15] node {$6$} (C);
\path [->] (B) edge [loop above] node {$5$} (B);
\path [->] (B) edge [bend right=15] node {$3$} (D);

\path [->] (C) edge [bend right=15] node {$4$} (B);
\path [->] (C) edge [loop right] node {$9$} (C);
\path [->] (C) edge [bend right=15] node {$2$} (D);

\path [->] (D) edge [loop below] node {$10$} (D);
\path [->] (D) edge [bend right=15] node {$1$} (A);
\path [->] (D) edge [bend right=15] node {$2$} (B);
\path [->] (D) edge [bend right=15] node {$2$} (C);

\end{scope}
\end{tikzpicture}
\caption{The graph $\thegraph_{13}$. The vertices $J_{C_i}$, $i\in\{1,2,3\}$, correspond to jacobians of genus-2 curves, whereas the vertex $E\times E$ corresponds to a product of elliptic curves. The numbers indicate the multiplicities of the edges.}
\label{fig:G13}
\end{figure}

\section{
    A special class of paths in \texorpdfstring{\(\thegraph_p\)}{Gp}
}
\label{sec:extensions}

We are interested in the kinds of isogenies that are represented by
paths in \(\thegraph_p\):
that is, the compositions of isogenies corresponding to adjacent edges.

First, fix a single edge \(\phi_0: A_0 \to A_1\) in \(\thegraph_p\).
By definition, 
\(\phi_0\) represents (up to isomorphism) a \((2,2)\)-isogeny:
that is, an isogeny whose kernel is a maximal \(2\)-Weil isotropic subgroup of
\(A_0[2]\), hence isomorphic to \((\ZZ/2\ZZ)^2\).

Now, consider the set of edges leaving \(A_1\):
these correspond to \((2,2)\)-isogenies that may be composed with \(\phi_0\).
We know that (counting multiplicity) there are fifteen such edges.
These edges fall naturally into three classes \emph{relative to} \(\phi_0\),
according to the structure of the kernel of the composed isogeny
(which, in each case, is a maximal \(4\)-Weil isotropic subgroup of \(A_0[4]\)).

\begin{definition}
    Let \(\phi_0: A_0\to A_1\) and \(\phi_1: A_1\to A_2\)
    be edges in \(\thegraph_p\).
    \begin{itemize}
        \item
            We say that \(\phi_1\) is the (necessarily unique)
            \textbf{dual extension} of \(\phi_0\)
            if \(\ker(\phi_1\circ\phi_0) \cong (\ZZ/2\ZZ)^4\),
            so
            \(\phi_1\circ\phi_0\) is a \((2,2,2,2)\)-isogeny
            (hence isomorphic to \( [2]_{A_0}\)).
            In this case, \(\ker\phi_1 = \phi_0(A_0[2])\).
        \item
            We say that \(\phi_1\) is a \textbf{bad extension} of \(\phi_0\)
            if 
            \(\ker(\phi_1\circ\phi_0) \cong (\ZZ/4\ZZ)\times(\ZZ/2\ZZ)^2\),
            so \(\phi_1\circ\phi_0\) is a \((4,2,2)\)-isogeny.
            In this case
            \((\ker\phi_1)\cap \phi_0(A_0[2]) \cong \ZZ/2\ZZ\),
            so there are precisely \(6\) bad extensions of any given \(\phi_0\).
        \item
            We say that \(\phi_1\) is a \textbf{good extension} of \(\phi_0\)
            if \(\ker(\phi_1\circ\phi_0) \cong (\ZZ/4\ZZ)^2\),
            so \(\phi_1\circ\phi_0\) is a \((4,4)\)-isogeny.
            In this case \((\ker\phi_1)\cap \phi_0(A_0[2]) = 0\),
            so there are precisely \(8\) good extensions of any \(\phi_0\).
    \end{itemize}
\end{definition}

\begin{remark}
    In~\cite[Definition 9.2.1]{smiththesis},
    \emph{good} extensions are called \emph{cyclic} and \emph{bad}
    extensions are called \emph{acyclic}.
    We prefer the good/bad terminology here to avoid confusion with the
    notion of composing isogenies to form eventual cycles in
    \(\thegraph_p\);
    the reason why good is good and bad is bad will become clear
    in Section~\ref{sec:hash}.
\end{remark}

We have seen how the three kinds of extensions
\[
    A_0 \stackrel{\phi_1}{\longrightarrow} A_1 \stackrel{\phi_2}{\longrightarrow} A_2
\]
can be distinguished by how the kernel of \(\phi_2\) intersects
with the image of \(A_0[2]\) under \(\phi_1\).
We can make these criteria more explicit
in terms of the Richelot isogeny formulas.

\subsection{Extensions of isogenies from \texorpdfstring{\(\thegraphjac_p\)}{Jp} to \texorpdfstring{\(\thegraphjac_p\)}{Jp} }

Recall the construction of Richelot isogenies
\(\phi_1: J_{C_0}\to J_{C_1}\)
from Proposition~\ref{prop:Richelot}:
given the curve \(C_0: y^2 = G_1\cdot G_2\cdot G_3\),
we set
\begin{align*}
    H_1 & := G_2'G_3-G_3'G_2
    \,,
    &
    H_2 & := G_3'G_1-G_1'G_3
    \,,
    &
    H_3 & := G_1'G_2-G_2'G_1
    \,.
\end{align*}
The curve \(C_1\) is defined by
\(C_1: y^2 = \delta^{-1}\cdot H_1 \cdot H_2 \cdot H_3\)
where \(\delta := \det(G_1,G_2,G_3)\).
The kernel of \(\phi_1\) corresponds to \(\{G_1,G_2,G_3\}\),
and the subgroup \(\phi_1(J_{C_0}[2]) \subset J_{C_1}[2]\) 
corresponds to \(\{H_1,H_2,H_3\}\).

\begin{proposition}
    \label{prop:good-extensions}
    With the notation above:
    if 
    \begin{align*}
        H_1 & = L_1\cdot L_2\,,
        &
        H_2 & = L_3\cdot L_4\,,
        &
        H_3 & = L_5\cdot L_6\,,
    \end{align*}
    with the \(L_i\) all linear (except possibly for one constant \(L_i\)
    in the case where \(H_1H_2H_3\) is quintic),
    then the good extensions of \(\phi_1\)
    are the Richelot isogenies with kernels corresponding to one of the
    following factorizations of \(H_1H_2H_3\):
    \begin{align*}
        ( L_1L_3, L_2L_5, L_4L_6 ), \quad
        ( L_1L_3, L_2L_6, L_4L_5 ), \\
        ( L_1L_4, L_2L_5, L_3L_6 ), \quad
        ( L_1L_4, L_2L_6, L_3L_5 ), \\
        ( L_1L_5, L_2L_3, L_4L_6 ), \quad
        ( L_1L_5, L_2L_4, L_3L_6 ), \\
        ( L_1L_6, L_2L_3, L_4L_5 ), \quad
        ( L_1L_6, L_2L_4, L_3L_5 ).
    \end{align*}
\end{proposition}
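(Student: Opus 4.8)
The plan is to reduce the statement to a purely combinatorial count on the $2$-torsion of $J_{C_1}$, using the dictionary between $(2,2)$-isogenies and quadratic splittings together with the identification of $\phi_1(J_{C_0}[2])$ already recorded just above. Recall that the outgoing edges of $J_{C_1}$ in $\thegraph_p$ correspond bijectively to the $(2,2)$-subgroups of $J_{C_1}[2]$, and that these in turn correspond to the quadratic splittings of $\delta^{-1}H_1H_2H_3$, i.e.\ to the partitions of its six roots $\rho_1,\dots,\rho_6$ (the roots of $L_1,\dots,L_6$, with one $\rho_i$ taken to be the point at infinity in the quintic case) into three unordered pairs. Under this dictionary the three non-trivial elements of the subgroup attached to a partition $\{\,\{\rho_a,\rho_b\},\{\rho_c,\rho_d\},\{\rho_e,\rho_f\}\,\}$ are precisely the $2$-torsion classes $[(\rho_a,0)-(\rho_b,0)]$, $[(\rho_c,0)-(\rho_d,0)]$, $[(\rho_e,0)-(\rho_f,0)]$. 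Finally, as noted just before the proposition, the subgroup $\phi_1(J_{C_0}[2])\subset J_{C_1}[2]$ is the one corresponding to $\{H_1,H_2,H_3\}=\{L_1L_2,L_3L_4,L_5L_6\}$, so its three non-trivial elements are the classes associated with the pairs $\{\rho_1,\rho_2\}$, $\{\rho_3,\rho_4\}$, $\{\rho_5,\rho_6\}$.

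Next I would translate the defining condition of a good extension. Both $\ker\phi_2$ and $\phi_1(J_{C_0}[2])$ are subgroups of $J_{C_1}[2]\cong(\ZZ/2\ZZ)^4$, so their intersection is trivial if and only if they share no non-trivial element. Since there are exactly $\binom{6}{2}=15$ two-element subsets of $\{\rho_1,\dots,\rho_6\}$ and exactly $15$ non-trivial elements of $J_{C_1}[2]$, the assignment $\{\rho_a,\rho_b\}\mapsto[(\rho_a,0)-(\rho_b,0)]$ is a bijection, so two partitions of the six roots define subgroups meeting non-trivially exactly when they share a pair. Hence $\phi_2$ is a good extension of $\phi_1$ if and only if the partition of $\{\rho_1,\dots,\rho_6\}$ describing $\ker\phi_2$ contains none of the pairs $\{\rho_1,\rho_2\}$, $\{\rho_3,\rho_4\}$, $\{\rho_5,\rho_6\}$; equivalently, writing the kernel splitting in terms of the $L_i$, none of the three quadratics $L_iL_j$ forming it equals $H_1=L_1L_2$, $H_2=L_3L_4$, or $H_3=L_5L_6$.

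It then remains to enumerate the partitions of the six-element set with blocks $B_1=\{\rho_1,\rho_2\}$, $B_2=\{\rho_3,\rho_4\}$, $B_3=\{\rho_5,\rho_6\}$ into three pairs, no pair of which lies inside a single block. Any such partition must be a transversal matching of $B_1,B_2,B_3$: every pair is cross-block by hypothesis, and if two pairs joined the same two blocks, say $B_1$ and $B_2$, then $B_1\cup B_2$ would be exhausted and the remaining pair would be forced to be $B_3$ itself, which is excluded. So exactly one pair joins each of $\{B_1,B_2\}$, $\{B_2,B_3\}$, $\{B_1,B_3\}$; choosing, for each block, which of its two elements goes to which of the other two blocks gives $2^3=8$ such partitions, and writing them out in the $L_i$ notation produces exactly the eight factorizations listed in the statement (this also reconfirms the count of $8$ good extensions in the definition). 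Carrying out this short case analysis and matching the eight resulting triples against the list finishes the argument.

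The content here is light: the substantive inputs — that $\{H_1,H_2,H_3\}$ is the splitting of $\phi_1(J_{C_0}[2])$, and the $8$/$6$/$1$ trichotomy — are already in hand, so what remains is bookkeeping. The only points needing genuine care are the bijection between non-trivial $2$-torsion classes and $2$-subsets of the Weierstrass points (including the correct treatment of the point at infinity when $H_1H_2H_3$ is quintic) and the observation that the intersection-of-subgroups criterion is exactly ``shares a pair''; I expect no real obstacle beyond getting these translations right.
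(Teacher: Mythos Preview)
Your proposal is correct and follows essentially the same approach as the paper's proof: identify $\phi_1(J_{C_0}[2])$ with the splitting $\{H_1,H_2,H_3\}$, translate ``trivial intersection'' into ``no quadratic factor in common with the $H_i$'', and enumerate. The paper dispatches the last step with the phrase ``follows from direct calculation'', whereas you spell out the combinatorics (the transversal-matching argument yielding $2^3=8$) and justify the ``shares a pair'' criterion via the bijection between non-trivial $2$-torsion classes and $2$-subsets of Weierstrass points; this extra detail is sound and clarifies exactly why the eight listed splittings are the right ones.
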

\begin{proof}
    The 
    quadratic splitting
    \(\{H_1,H_2,H_3\}\)
    corresponds to the subgroup of \(J_{C_1}[2]\)
    which is the kernel of the dual \(\hat\phi_1\),
    and also the image \(\phi_1(J_{C_0}[2])\).
    The good extensions of \(\phi_1\)
    are those whose kernel intersects trivially with
    \(\phi_1(J_{C_2}[2])\);
    they therefore correspond to the quadratic splittings
    with no quadratics proportional to any of the \(H_i\).
    The list of 8 splittings above follows from direct calculation.
\end{proof}

We now discuss the good extensions of isogenies involving products of elliptic curves. This is mainly for the sake of completeness, because in our proposed hash function below, these cases will not be implemented.

\subsection{Extensions of isogenies from \texorpdfstring{\(\thegraphjac_p\)}{Jp} to \texorpdfstring{\(\thegraphell_p\)}{Ep} }

Recall from the preliminaries that for a $(2,2)$-isogeny $\phi_1: J_{C_0}\rightarrow E_1\times E_2$, the domain can be written as the jacobian of a curve $C_0: y^2 = G_1G_2G_3$, where
            \[
                G_i = a_{i,1}(x-s_1)^2+a_{i,2}(x-s_2)^2
            \]
for certain $s_1,s_2,a_{i,1},a_{i,2}\in\FF_{p^2}$ for $i=1,2,3$. The elliptic curves determining the codomain can then be defined by the equations
            \begin{align*}
                E_1 : y^2 & = \prod_{i=1}^3 (a_{i,1}x+a_{i,2})
                \,,
                &
                E_2 : y^2 & = \prod_{i=1}^3 (a_{i,1}+a_{i,2}x)
                \,.
            \end{align*}
For $i=1,2,3$ we will write $\{\alpha_i,\alpha_i'\}$ for the roots of $G_i$, $P_i = (-a_{i,2}/a_{i,1}, 0)$ for the Weierstra\ss\ points of $E_1$, $Q_i = (-a_{i,1}/a_{i,2},0)$ for the Weierstra\ss\ points of $E_2$, and $\mathcal O_{E_1}$ and $\mathcal O_{E_2}$ for the neutral element of respectively $E_1$ and $E_2$.
            
\begin{proposition}\label{prop:JtoE}
    With the notation above, the good extensions of $\phi_1$ are the $(2,2)$-isogenies with kernel one of the 6 combinations
\[
        \{ (\mathcal O_{E_1}, \mathcal O_{E_2}), (P_i, \mathcal O_{E_2}), (\mathcal O_{E_1}, Q_j), (P_i, Q_j)  \},
\]
for $i\neq j$ in $\{1,2,3\}$, or one of
\begin{align*}
    \{ (\mathcal O_{E_1}, \mathcal O_{E_2}), (P_1,Q_2), (P_2,Q_3), (P_3,Q_1) \},\\
    \{ (\mathcal O_{E_1}, \mathcal O_{E_2}), (P_1,Q_3), (P_2,Q_1), (P_3,Q_2) \}.
\end{align*}
\end{proposition}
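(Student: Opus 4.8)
The plan is to mirror the strategy of Proposition~\ref{prop:good-extensions}: identify the subgroup $\phi_1(J_{C_0}[2]) \subset (E_1 \times E_2)[2]$ explicitly, and then enumerate the maximal $2$-Weil-isotropic subgroups of $(E_1\times E_2)[2]$ that intersect it trivially. First I would determine $\phi_1(J_{C_0}[2])$. From the preliminaries, the $2$-torsion of $J_{C_0}$ is spanned by classes $[(\alpha_i,0)-(\alpha_i',0)]$ and $[(\alpha_i,0)-(\alpha_j,0)]$, and the kernel of $\phi_1$ is the maximal isotropic subgroup corresponding to the splitting $\{G_1,G_2,G_3\}$, i.e.\ $\{0,[(\alpha_1,0)-(\alpha_1',0)],[(\alpha_2,0)-(\alpha_2',0)],[(\alpha_3,0)-(\alpha_3',0)]\}$. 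Using the explicit maps $\phi_1 = (x,y)\mapsto((x-s_1)^2/(x-s_2)^2,\ast)$ onto $E_1$ and $(x,y)\mapsto((x-s_2)^2/(x-s_1)^2,\ast)$ onto $E_2$ from Proposition~\ref{prop:Richelot}, I would compute the images of the remaining $2$-torsion generators and show that $\phi_1(J_{C_0}[2])$ is precisely the diagonal-type subgroup $\{(\mathcal O_{E_1},\mathcal O_{E_2}),(P_1,Q_1),(P_2,Q_2),(P_3,Q_3)\}$; the key observation is that $\phi_1$ identifies the Weierstra\ss\ point of $E_1$ coming from the factor $a_{i,1}x+a_{i,2}$ with the Weierstra\ss\ point of $E_2$ coming from $a_{i,1}+a_{i,2}x$, since both arise from the same quadratic $G_i$.

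Next I would invoke the classification recalled in the preliminaries: the $15$ maximal $2$-Weil-isotropic subgroups of $(E_1\times E_2)[2]$ come in two flavours --- the $9$ ``split'' ones of the form $\langle(P_i,\mathcal O_{E_2}),(\mathcal O_{E_1},Q_j)\rangle$, and the $6$ ``graph-type'' ones $\{(\mathcal O_{E_1},\mathcal O_{E_2}),(P_1,Q_{\sigma(1)}),(P_2,Q_{\sigma(2)}),(P_3,Q_{\sigma(3)})\}$ for a permutation $\sigma \in S_3$. Having pinned down $\ker\hat\phi_1 = \phi_1(J_{C_0}[2])$ as the graph of the identity permutation, the good extensions are exactly those subgroups meeting it only in $(\mathcal O_{E_1},\mathcal O_{E_2})$. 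I would then simply run through all $15$ candidates: among the $9$ split subgroups, $\langle(P_i,\mathcal O_{E_2}),(\mathcal O_{E_1},Q_j)\rangle$ contains $(P_i,Q_j)$, which lies in $\phi_1(J_{C_0}[2])$ iff $i=j$, so exactly the $6$ with $i\neq j$ survive; among the $6$ graph-type subgroups, the one attached to $\sigma$ contains $(P_k,Q_{\sigma(k)})\in\phi_1(J_{C_0}[2])$ iff $\sigma(k)=k$, so the only ones meeting $\phi_1(J_{C_0}[2])$ trivially are the two $3$-cycles $\sigma=(123)$ and $\sigma=(132)$. This yields exactly the $6+2=8$ subgroups listed, matching the count of $8$ good extensions from the definition.

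The main obstacle is the first step: correctly computing $\phi_1(J_{C_0}[2])$ and verifying that it really is the ``identity graph'' subgroup rather than some twisted version. This requires care with the birational maps $\phi_1,\phi_2$ --- in particular tracking which Weierstra\ss\ point of each $E_i$ is the image of a given ramification point of $C_0$, and checking that the pushforward of a degree-zero divisor class behaves as expected under these non-morphism rational maps (one works with the induced map on Jacobians, or equivalently with the pullback of the elliptic $2$-torsion and a dimension/pairing count). Once the image subgroup is correctly identified as $\{(\mathcal O_{E_1},\mathcal O_{E_2}),(P_i,Q_i)\}_{i=1,2,3}$, the enumeration is a finite, essentially combinatorial check over $S_3$ and the $9$ split subgroups, and the result follows. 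As in Proposition~\ref{prop:good-extensions}, I would record this as ``direct calculation'' after setting up the image subgroup, and refer to \cite{kani,bruin,degenformulas} for the underlying description of $(2,2)$-subgroups of $E_1\times E_2$.
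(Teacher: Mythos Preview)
Your proposal is correct and follows essentially the same route as the paper's proof: identify $\phi_1(J_{C_0}[2])$ as the diagonal subgroup $\{(\mathcal O_{E_1},\mathcal O_{E_2}),(P_1,Q_1),(P_2,Q_2),(P_3,Q_3)\}$, then run through the fifteen maximal isotropic subgroups of $(E_1\times E_2)[2]$ and keep those intersecting it trivially. The only difference is that the paper outsources the first step to \cite[Proposition~8.3.1]{smiththesis} (which shows directly that the classes $[(\alpha_i,0)-(\alpha_j,0)]$ etc.\ map to $(P_k,Q_k)$ for $\{i,j,k\}=\{1,2,3\}$), whereas you propose to verify it yourself via the explicit covers $(x,y)\mapsto((x-s_1)^2/(x-s_2)^2,\ast)$; your computation that a root of $G_i$ lands on $P_i$ (resp.\ $Q_i$) is exactly what underlies that citation, so the two arguments coincide.
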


\begin{proof}
The proof of the formulas in \cite[Proposition 8.3.1]{smiththesis} shows that, for $\{i,j,k\}=\{1,2,3\}$, the 2-torsion elements $[(\alpha_i,0)-(\alpha_j,0)]$, $[(\alpha_i,0)-(\alpha_j',0)]$, $[(\alpha_i',0)-(\alpha_j,0)]$, $[(\alpha_i',0)-(\alpha_j',0)]$ get mapped to $(P_k,Q_k)$ in $E_1\times E_2$. So the good extensions of $\phi_1$ are the isogenies whose kernels intersect \[\phi_1(J_{C_0}[2]) = \{ (\mathcal O_{E_1}, \mathcal O_{E_2}), (P_1,Q_1), (P_2,Q_2), (P_3,Q_3) \}\] trivially, which are exactly the ones listed.
\end{proof}
Note that in the previous proposition, the 6 good extensions of the first type always have a product of elliptic curves as codomain. The other 2 will typically be to a jacobian of a genus-2 curve, unless $E_1\cong E_2$ and the given kernel is contained in the graph of an isomorphism $E_1\rightarrow E_2$.

\subsection{Extensions of isogenies from \texorpdfstring{\(\thegraphell_p\)}{Ep} to \texorpdfstring{\(\thegraphjac_p\)}{Jp}}

Recall that every $(2,2)$-isogeny $\phi_1: E_1\times E_2\rightarrow J_{C_1}$, with
\begin{align*}
                E_1 : y^2 & = \prod_{i=1}^3 (x-\alpha_i)
                \,,
                &
                E_2 : y^2 & = \prod_{i=1}^3 (x-\beta_i)
                \,,
\end{align*}
always has codomain the jacobian of a genus-2 curve $C_1$ that can be defined by an equation of the form
\begin{multline}\label{elltojac2}
    y^2 = -\left(A(\alpha_2-\alpha_1)(\alpha_1-\alpha_3)x^2 + B(\beta_2-\beta_1)(\beta_1-\beta_3)\right)\\
    \cdot \left(A(\alpha_3-\alpha_2)(\alpha_2-\alpha_1)x^2 + B(\beta_3-\beta_2)(\beta_2-\beta_1)\right)\\
    \cdot \left(A(\alpha_1-\alpha_3)(\alpha_3-\alpha_2)x^2 + B(\beta_1-\beta_3)(\beta_3-\beta_2)\right),
\end{multline}
up to permutation of the roots $\beta_i$, for well-defined nonzero constants $A$ and $B$ that depend on $\alpha_i$ and $\beta_i$. We will denote the quadratic factors on the right hand side of Equation \ref{elltojac2} on the first, second and third line by $H_1$, $H_2$ and $H_3$ respectively, such that $C_1 : y^2 = - H_1 \cdot H_2 \cdot H_3$.

\begin{proposition}
    \label{prop:good-extensionsEtoJ}
    With the notation above:
    if 
    \begin{align*}
        H_1 & = L_1\cdot L_2\,,
        &
        H_2 & = L_3\cdot L_4\,,
        &
        H_3 & = L_5\cdot L_6\,,
    \end{align*}
    with the \(L_i\) all linear (except possibly for one constant \(L_i\)
    in the case where \(H_1H_2H_3\) is quintic),
    then the good extensions of \(\phi_1\)
    are the Richelot isogenies with kernels corresponding to one of the
    following factorizations of \(H_1H_2H_3\):
    \begin{align*}
        ( L_1L_3, L_2L_5, L_4L_6 ), \quad
        ( L_1L_3, L_2L_6, L_4L_5 ), \\
        ( L_1L_4, L_2L_5, L_3L_6 ), \quad
        ( L_1L_4, L_2L_6, L_3L_5 ), \\
        ( L_1L_5, L_2L_3, L_4L_6 ), \quad
        ( L_1L_5, L_2L_4, L_3L_6 ), \\
        ( L_1L_6, L_2L_3, L_4L_5 ), \quad
        ( L_1L_6, L_2L_4, L_3L_5 ).
    \end{align*}
\end{proposition}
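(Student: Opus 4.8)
The plan is to reuse the argument from the proof of Proposition~\ref{prop:good-extensions} almost verbatim; the only ingredient particular to the present situation is the identification --- recorded in the preliminaries on \((2,2)\)-isogenies from products of elliptic curves, and proved in \cite{degenformulas}, \cite{kani}, \cite{bruin} --- of which subgroup of \(J_{C_1}[2]\) is cut out by the triple \(\{H_1,H_2,H_3\}\) appearing in Equation~\ref{elltojac2}. So first I would recall that \(\{H_1,H_2,H_3\}\) is the quadratic splitting attached to the dual isogeny \(\hat\phi_1\colon J_{C_1}\to E_1\times E_2\), so that the corresponding maximal \(2\)-Weil isotropic subgroup of \(J_{C_1}[2]\) is exactly \(\ker\hat\phi_1 = \phi_1((E_1\times E_2)[2])\). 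Two small remarks belong here: each \(H_i\) has \(x^2\)-coefficient of the form \(A\,(\alpha_a-\alpha_b)(\alpha_b-\alpha_c)\) with \(A\neq 0\) and the \(\alpha_i\) distinct, hence nonzero, so \(H_1H_2H_3\) is genuinely sextic in this case and the ``quintic'' sub-case in the statement is actually vacuous (it is kept only for parallelism with Proposition~\ref{prop:good-extensions}); and, since \(J_{C_1}\) is superspecial, all of \(J_{C_1}[2]\) is \(\FF_{p^2}\)-rational, so each \(H_i\) really does factor as \(L_{2i-1}L_{2i}\) with \(L_{2i-1},L_{2i}\) linear over \(\FF_{p^2}\).

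Next, by the definition of a good extension, a \((2,2)\)-isogeny \(\phi_2\) out of \(J_{C_1}\) is a good extension of \(\phi_1\) precisely when \(\ker\phi_2\cap\phi_1((E_1\times E_2)[2]) = 0\), i.e.\ \(\ker\phi_2\cap\ker\hat\phi_1 = 0\). Passing to the dictionary between maximal \(2\)-Weil isotropic subgroups of \(J_{C_1}[2]\) and quadratic splittings of \(H_1H_2H_3\) --- equivalently, partitions of the six linear factors \(L_1,\dots,L_6\) into three unordered pairs --- the intersection of the subgroups attached to two splittings is nontrivial if and only if the splittings share a quadratic up to scalars; this is the same fact about the \(2\)-torsion of a genus-\(2\) jacobian that underlies the proof of Proposition~\ref{prop:good-extensions}. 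Since \(\{H_1,H_2,H_3\} = \{L_1L_2,\,L_3L_4,\,L_5L_6\}\), it follows that the good extensions of \(\phi_1\) correspond exactly to the perfect matchings of \(\{L_1,\dots,L_6\}\) that avoid all three of the pairs \(\{L_1,L_2\}\), \(\{L_3,L_4\}\), \(\{L_5,L_6\}\).

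It then remains to enumerate those matchings. A six-element set has \(15\) perfect matchings; by inclusion--exclusion the number using at least one of the three forbidden pairs is \(3\cdot 3 - 3\cdot 1 + 1 = 7\) (three matchings through each forbidden pair, minus the three matchings through a pair of forbidden pairs, plus the one matching through all three), leaving \(15 - 7 = 8\). Listing these --- for instance by first choosing the partner of \(L_1\) among \(L_3,L_4,L_5,L_6\) and then legally pairing up the remaining four factors --- reproduces precisely the eight factorizations \((L_1L_3,L_2L_5,L_4L_6),\dots,(L_1L_6,L_2L_4,L_3L_5)\) displayed in the statement.

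I do not expect a genuine obstacle here. Once the first step is in place the argument is purely combinatorial, and the only non-formal input --- that two quadratic splittings of \(H_1H_2H_3\) give \(2\)-Weil isotropic subgroups of \(J_{C_1}[2]\) meeting nontrivially exactly when they have a quadratic in common --- is already invoked in Proposition~\ref{prop:good-extensions} and is a standard consequence of the description of \(J_{C_1}[2]\) in terms of differences of Weierstra\ss\ points. The one thing to handle with care is the bookkeeping distinguishing good from bad and dual extensions, so that the counts \(8 + 6 + 1 = 15\) demanded by the definition come out right; this is immediate from \(\phi_1((E_1\times E_2)[2]) = \ker\hat\phi_1\).
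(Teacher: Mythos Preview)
Your proposal is correct and follows essentially the same route as the paper: identify \(\{H_1,H_2,H_3\}\) with the quadratic splitting of the dual \(\hat\phi_1\) (the paper cites \cite{degenformulas} for this, you cite the preliminaries, which already record it), then invoke the combinatorial argument of Proposition~\ref{prop:good-extensions} verbatim. Your side observations---that the quintic sub-case is vacuous here because each \(H_i\) has nonzero \(x^2\)-coefficient, and the inclusion--exclusion count of the eight matchings---are correct extras not present in the paper's proof.
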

\begin{proof}
The proof of Equation \ref{elltojac2} in \cite{degenformulas} constructs the dual isogeny $\hat\phi_1: J_{C_1}\rightarrow E_1'\times E_2'$, where $E_1'\cong E_1$ and $E_2'\cong E_2$. More specifically, $E_1'$ and $E_2'$ are given by
\begin{multline*}
    E_1': y^2 = -\left(A(\alpha_2-\alpha_1)(\alpha_1-\alpha_3)x + B(\beta_2-\beta_1)(\beta_1-\beta_3)\right)\\
    \cdot \left(A(\alpha_3-\alpha_2)(\alpha_2-\alpha_1)x + B(\beta_3-\beta_2)(\beta_2-\beta_1)\right)\\
    \cdot \left(A(\alpha_1-\alpha_3)(\alpha_3-\alpha_2)x + B(\beta_1-\beta_3)(\beta_3-\beta_2)\right),
\end{multline*}
\begin{multline*}
    E_2': y^2 = -\left(A(\alpha_2-\alpha_1)(\alpha_1-\alpha_3) + B(\beta_2-\beta_1)(\beta_1-\beta_3)x\right)\\
    \cdot \left(A(\alpha_3-\alpha_2)(\alpha_2-\alpha_1) + B(\beta_3-\beta_2)(\beta_2-\beta_1)x\right)\\
    \cdot \left(A(\alpha_1-\alpha_3)(\alpha_3-\alpha_2) + B(\beta_1-\beta_3)(\beta_3-\beta_2)x\right).
\end{multline*}
Hence the quadratic splitting $\{H_1, H_2, H_3\}$ corresponds to the subgroup of $J_{C_1}[2]$ which is the kernel of the dual $\hat\phi_1$ and we can continue the proof just as in the Richelot isogeny case.
\end{proof}

\subsection{Extensions of isogenies from \texorpdfstring{\(\thegraphell_p\)}{Ep} to \texorpdfstring{\(\thegraphell_p\)}{Ep}}

\begin{proposition}\label{prop:EtoE}
Let $\phi_1: E_1\times E_2\rightarrow E_1'\times E_2'$ be a $(2,2)$-isogeny. Denote by $\mathcal O_{E_1}$, $\mathcal O_{E_2}$, $\mathcal O_{E_1'}$, $\mathcal O_{E_2'}$ the identity elements of respectively $E_1$, $E_2$, $E_1'$ and $E_2'$. For $i=1,2,3$ we write $P_i, Q_i, P_i',  Q_i'$ for the Weierstra\ss\ points of respectively $E_1, E_2, E_1', E_2'$. If $$\ker(\phi_1) = \{ (\mathcal O_{E_1}, \mathcal O_{E_2}), (P_1, \mathcal O_{E_2}), (\mathcal O_{E_1}, Q_1), (P_1, Q_1)  \}, $$
and $\phi_1|_{E_1}(P_2) = \phi_1|_{E_1}(P_3) = P_1'$, $\phi_1|_{E_2}(Q_2) = \phi_1|_{E_2}(Q_3) = Q_1'$,  then the good extensions of $\phi_1$ are the isogenies with kernel one of the 4 combinations
\[
        \{ (\mathcal O_{E_1'}, \mathcal O_{E_2'}), (P_i', \mathcal O_{E_2'}), (\mathcal O_{E_1'}, Q_j'), (P_i', Q_j')  \},
\]
where $i\neq 1$ and $j\neq 1$, or one of
\begin{align*}
    \{ (\mathcal O_{E_1'}, \mathcal O_{E_2'}), (P_1',Q_2'), (P_2',Q_3'), (P_3',Q_1') \},\\
    \{ (\mathcal O_{E_1'}, \mathcal O_{E_2'}), (P_1',Q_3'), (P_2',Q_1'), (P_3',Q_2') \},\\
    \{ (\mathcal O_{E_1'}, \mathcal O_{E_2'}), (P_1',Q_2'), (P_2',Q_1'), (P_3',Q_3') \},\\
    \{ (\mathcal O_{E_1'}, \mathcal O_{E_2'}), (P_1',Q_3'), (P_2',Q_2'), (P_3',Q_1') \}.
\end{align*}
\end{proposition}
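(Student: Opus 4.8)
The plan is to translate the statement, via the definition of good extension from Section~\ref{sec:extensions}, into a purely combinatorial question about the fifteen maximal $2$-Weil-isotropic subgroups of $(E_1'\times E_2')[2]$, exactly as in the proofs of Propositions~\ref{prop:good-extensions}, \ref{prop:JtoE} and~\ref{prop:good-extensionsEtoJ}. Concretely, $\phi_2$ is a good extension of $\phi_1$ if and only if $\ker(\phi_2)\cap\phi_1\bigl((E_1\times E_2)[2]\bigr)=\{(\mathcal O_{E_1'},\mathcal O_{E_2'})\}$, so it suffices to (i) pin down the image $\phi_1\bigl((E_1\times E_2)[2]\bigr)$, and (ii) run through the fifteen subgroups and record which ones meet it trivially.

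For step (i) I would use the product shape of the kernel: since $\ker(\phi_1)=\langle(P_1,\mathcal O_{E_2})\rangle\times\langle(\mathcal O_{E_1},Q_1)\rangle$, the isogeny $\phi_1$ factors as $\psi_1\times\psi_2$, with $\psi_1\colon E_1\to E_1'$ the $2$-isogeny of kernel $\{\mathcal O_{E_1},P_1\}$ and $\psi_2\colon E_2\to E_2'$ that of kernel $\{\mathcal O_{E_2},Q_1\}$. Because $P_1+P_2+P_3=\mathcal O_{E_1}$ in $E_1[2]$ we have $P_2+P_3=P_1\in\ker\psi_1$, hence $\psi_1(P_2)=\psi_1(P_3)$; the hypothesis just names this common $2$-torsion value $P_1'$, and likewise $Q_1'$. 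Thus $\psi_1(E_1[2])=\{\mathcal O_{E_1'},P_1'\}$ and $\psi_2(E_2[2])=\{\mathcal O_{E_2'},Q_1'\}$, so that
\[
\Gamma:=\phi_1\bigl((E_1\times E_2)[2]\bigr)=\{\mathcal O_{E_1'},P_1'\}\times\{\mathcal O_{E_2'},Q_1'\},
\]
an order-$4$ group whose three nonzero elements are $(P_1',\mathcal O_{E_2'})$, $(\mathcal O_{E_1'},Q_1')$ and $(P_1',Q_1')$.

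For step (ii) I would recall from the preliminaries the two families of $(2,2)$-subgroups of $(E_1'\times E_2')[2]$: the nine \emph{product-type} groups $\{(\mathcal O_{E_1'},\mathcal O_{E_2'}),(R,\mathcal O_{E_2'}),(\mathcal O_{E_1'},S),(R,S)\}$ with $R$ a nonzero point of $E_1'[2]$ and $S$ a nonzero point of $E_2'[2]$, and the six \emph{graph-type} groups $\{(\mathcal O_{E_1'},\mathcal O_{E_2'})\}\cup\{(P_i',Q_{\tau(i)}'):i=1,2,3\}$ indexed by the permutations $\tau$ of $\{1,2,3\}$. A product-type group meets $\Gamma$ nontrivially exactly when $R=P_1'$ or $S=Q_1'$, so the good ones are precisely the four with $R\in\{P_2',P_3'\}$ and $S\in\{Q_2',Q_3'\}$, which is the first family in the statement. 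A graph-type group contains a nonzero element of $\Gamma$ only if $(P_i',Q_{\tau(i)}')=(P_1',Q_1')$ for some $i$, i.e.\ only if $\tau(1)=1$; hence the good ones are exactly those with $\tau(1)\in\{2,3\}$, and writing out these four permutations reproduces the second family. This gives $4+4=8$ good extensions, matching the general count in Section~\ref{sec:extensions}.

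I do not expect a genuine obstacle here: once $\Gamma$ has been identified the rest is a finite inspection. The one place deserving a sentence of care is the reduction in step (i) --- that $\phi_1$ really is $\psi_1\times\psi_2$ and that each $\psi_i$ sends both off-kernel Weierstrass points of its factor to a single point of the image --- but both facts fall out immediately from the product shape of $\ker\phi_1$ and the relation $P_1+P_2+P_3=\mathcal O_{E_1}$.
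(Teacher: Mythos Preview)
Your proposal is correct and follows exactly the same approach as the paper: identify $\phi_1\bigl((E_1\times E_2)[2]\bigr)$ as $\{(\mathcal O_{E_1'},\mathcal O_{E_2'}),(P_1',\mathcal O_{E_2'}),(\mathcal O_{E_1'},Q_1'),(P_1',Q_1')\}$ and then pick out the $(2,2)$-subgroups meeting it trivially. The paper's own proof is a one-liner (``the good extensions are those intersecting this set trivially, so the proof is immediate''), so your write-up simply fills in the details the authors left to the reader.
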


\begin{proof}
The good extensions are determined by the $(2,2)$-isogenies that intersect
\[
\{ (\mathcal O_{E_1'}, \mathcal O_{E_2'}), (P_1', \mathcal O_{E_2'}), (\mathcal O_{E_1'}, Q_1'), (P_1', Q_1')  \}
\]
trivially, so the proof is immediate.
\end{proof}

\subsection{Connectedness}

\begin{conjecture}
    \label{conj:cyclic-jacobian-connected}
    For every 2 vertices $A$ and $A'$ in \(\thegraphjac_p\subset\thegraph_p\),
    there exists a path
    \[
        A = A_0 
        \xrightarrow{\phi_0} A_1 
        \xrightarrow{\phi_1} \cdots
        \xrightarrow{\phi_{k-1}} A_{k} = A'
    \]
    of \(k\) edges, for some \(k \ge 0\),
    such that all of the \(A_i\) are in \(\thegraphjac_p\)
    and each \(\phi_i, i\neq 0,\) is a good extension of \(\phi_{i-1}\).
    (The composed isogeny is then a \((2^k,2^k)\)-isogeny.)
\end{conjecture}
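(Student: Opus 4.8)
We do not prove Conjecture~\ref{conj:cyclic-jacobian-connected}; the intended line of attack, and the point at which it stalls, are as follows. \emph{The composition lemma.} First one proves the parenthetical claim that a chain of good extensions composes to a $(2^k,2^k)$-isogeny. If $\phi_i\colon A_i\to A_{i+1}$ for $0\le i<k$ with $\phi_i$ a good extension of $\phi_{i-1}$ for $i\ge1$, then by induction the composite $\Phi_k:=\phi_{k-1}\circ\cdots\circ\phi_0$ satisfies $\ker\Phi_k\subseteq A_0[2^k]$, $\ker\Phi_k\cong(\ZZ/2^k\ZZ)^2$, and $\ker\Phi_k\cap A_0[2^{k-1}]=\ker\Phi_{k-1}$. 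In the inductive step the defining condition $\ker\phi_{k-1}\cap\phi_{k-2}(A_{k-2}[2])=0$, together with $\phi_{k-2}(A_{k-2}[2])\subseteq\ker\widehat{\Phi_{k-1}}=\Phi_{k-1}(A_0[2^{k-1}])$, pins the extension $0\to(\ZZ/2^{k-1}\ZZ)^2\to\ker\Phi_k\to(\ZZ/2\ZZ)^2\to 0$ down to the one of type $(\ZZ/2^k\ZZ)^2$. This is elementary, if a little long; it is the genus-$2$ analogue of the observation that a non-backtracking walk of $\ell$-isogenies is a cyclic $\ell^k$-isogeny.

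\emph{Reformulation.} Let $\vec{\thegraph}_p$ be the directed multigraph whose vertices are the edges of $\thegraph_p$, with an arc $\phi_0\to\phi_1$ whenever $\phi_1$ is a good extension of $\phi_0$. It is $8$-out-regular by definition and $8$-in-regular because $\phi_1$ is a good extension of $\phi_0$ precisely when $\widehat{\phi_0}$ is a good extension of $\widehat{\phi_1}$ (both assertions say that $\phi_1\circ\phi_0$ is a $(4,4)$-isogeny, and duality of $(4,4)$-isogenies is again a $(4,4)$-isogeny). By the composition lemma, Conjecture~\ref{conj:cyclic-jacobian-connected} is equivalent to the strong connectivity of the subgraph of $\vec{\thegraph}_p$ induced on those edges of $\thegraph_p$ having both endpoints in $\thegraphjac_p$.

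\emph{Connectivity, and the main obstacle.} Granting Conjecture~\ref{conj:jacobian-connected}, take any path $A=A_0\to\cdots\to A_k=A'$ in $\thegraph_p$ with all $A_i\in\thegraphjac_p$ and repair it by local surgery: wherever $\phi_{i+1}$ is the dual extension of $\phi_i$, delete those two steps (they compose to $[2]$, so $A_{i+2}=A_i$); wherever $\phi_{i+1}$ is a bad extension, splice in a bounded-length good-extension detour through $\thegraphjac_p$ from $A_i$ to $A_{i+2}$, invoking Theorem~\ref{th:graph} to control the rare steps that would otherwise leak into $\thegraphell_p$. The existence of such detours — equivalently, the abundance at each $J\in\thegraphjac_p$ of good-extension closed walks realising $[2^m]$-type self-isogenies, i.e.\ short cycles of the sort used by Flynn and Ti~\cite{flynnti}, but now confined to good extensions inside $\thegraphjac_p$ — is where a proof currently stalls: it should follow from a Ramanujan-type spectral gap for the (non-simple) transition operator of $\vec{\thegraph}_p$, the $\mathrm{GSp}_4$-analogue of the Pizer/Ihara estimates that make non-backtracking walks on supersingular elliptic $\ell$-isogeny graphs mix rapidly, but such a bound is not available to us. In its place we would present heuristic evidence: modelling $\vec{\thegraph}_p$ as a random $8$-regular digraph on $\Theta(p^3)$ vertices makes strong connectivity overwhelmingly likely, and this is corroborated by the Magma experiments of Section~\ref{section:implementation_and_timings}. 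Thus the composition lemma, the reformulation, and the removal of dual extensions are routine, and the genuine difficulty is concentrated entirely in the spectral gap.
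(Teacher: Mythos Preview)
The statement is labelled and treated as a conjecture in the paper; there is no proof, only a remark that it has been verified in Magma for all primes $p\le 1013$. Your write-up is equally explicit that no proof is offered, so in that sense you and the paper agree.

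Where you differ is that you supply substantially more than the paper does. You sketch a correct argument for the parenthetical assertion that a chain of good extensions composes to a $(2^k,2^k)$-isogeny: the inductive step showing $\ker\Phi_k\cap A_0[2]=\ker\Phi_{k-1}\cap A_0[2]$ via the good-extension condition, so that the $2$-rank of $\ker\Phi_k$ stays equal to $2$ and hence $\ker\Phi_k\cong(\ZZ/2^k\ZZ)^2$, is exactly what is needed, whereas the paper simply takes this for granted. You also recast the problem in terms of an auxiliary edge graph and isolate a $\mathrm{GSp}_4$-type spectral gap as the obstruction to a proof; this is a genuine addition over the paper's purely experimental stance.

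Two small points. Your claim that the conjecture is \emph{equivalent} to strong connectivity of the induced subgraph of $\vec{\thegraph}_p$ on edges with both ends in $\thegraphjac_p$ overstates things: strong connectivity of that edge graph certainly implies the conjecture, but the converse is not immediate, since the conjecture lets the first step $\phi_0$ be arbitrary and imposes no constraint tying the final step to a prescribed target edge. Second, the Magma verification you invoke is reported in Section~\ref{sec:extensions}, not in Section~\ref{section:implementation_and_timings}. Neither point affects your honest conclusion that the argument stalls at the spectral gap.
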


%

Conjecture~\ref{conj:cyclic-jacobian-connected}
is our strongest conjecture.
It differs from Conjecture~\ref{conj:jacobian-connected} in that at each step
in a path, the number of choices is reduced from all 15 isogenies
to the 8 good isogenies.
Conjecture~\ref{conj:cyclic-jacobian-connected} is easy to verify
for small $p$ using the formulas for
Richelot isogenies and the exact formula from Theorem~\ref{size}. 
We verified this part of the conjecture for $p\leq 1013$ using Magma,
but from then onward the computations become slow since we work with
graphs of several hundred thousands of vertices already.
Nonetheless, this is a first
indication that Conjecture~\ref{conj:cyclic-jacobian-connected} might hold.

\section{
    Hash functions from Richelot isogenies
}
\label{sec:hash}

Turning the graph $\thegraph_p$ into a hash function happens analogously to the elliptic curve case with some small caveats. We will first describe the function, thereby repairing Takashima's proposal from \cite{takashima}, and then work out some small remarks and argue why certain choices were made.

We start by choosing a large prime $p$ (in function of some security
parameter $\lambda$) such that $p\equiv 5\bmod 6$. We start at the
vertex corresponding to the jacobian of the genus-2 curve $C_0$ defined
over $\FF_{p^2}$, given by the equation $y^2 = x(x-1)(x+1)(x-2)(x-1/2)$.
The hash function starts by multiplying the input by $8^{10}$, or equivalently, padding it with 30 zeroes. The hashing will happen 3 bits at a time,
with each three bits determining a choice of one of the eight good
extensions relative to the previous step.
So for our starting vertex we will need to make an initial choice as if we performed a step prior to starting. The quadratic splitting we will choose for $C_0$ is \[ \left\{ x^2-1, x^2-2x, x- \frac{1}{2} \right\}.\]
The 8 quadratic splittings that we will consider are those that have
\textit{no} quadratic factor in common with the one that was obtained from
the previous step. These splittings are then ordered according to some
lexicographical order of the roots. In practice this means we just need to fix a quadratic
equation that determines the field extension $\FF_p \subseteq \FF_{p^2}$. Next we process 3 bits of our input according to the order of the 8 chosen edges. If the chosen edge leads to a vertex corresponding to the product of elliptic curves, the function stops and outputs an error. If the chosen edge leads to a vertex corresponding to a jacobian of a genus-2 curve then we rinse and repeat for the next 3 bits. Once the entire message has been processed we output the Cardona--Quer invariants of the genus-2 curve corresponding to the vertex we ended up in.

\subsection{Avoiding trivial cycles} \label{sec:avoiding}

A hash function should be collision resistant, so we need to at least
avoid trivial cycles in our graph. In the elliptic curve case, this is
simply done by disallowing the edge associated to the dual isogeny from
where we just came. 

Similarly,
we must avoid using dual isogenies
when walking in \(\thegraph_p\),
to avoid extremely easy cycles:
\[
    \begin{tikzcd}
        A_0\arrow[bend left=60]{r}{\phi_1} & A_1 \arrow[bend
        left=60]{l}{\hat\phi_1}
    \end{tikzcd}
\]
But there is an additional subtlety in genus-2,
as noted in~\cite{flynnti}.
If we follow a \((2,2)\)-isogeny \(A_0 \to A_1\)
with a bad extension \(A_1 \to A_2\),
then we end up with a \((4,2,2)\)-isogeny;
but then, for every $(4,2,2)$-isogeny $A_0\rightarrow A_2$ there are 3
distinct ways to split it up into the concatenation of two $(2,2)$-isogenies as in the following diagram.
\[
    \begin{tikzcd}
        & A_1' \arrow{dr}{\phi_2'} \\
        A_0 \arrow{ur}{\phi_1'} \arrow{r}{\phi_1} \arrow[swap]{dr}{\phi_1''} 
            & A_1 \arrow{r}{\phi_2} & A_2 \\
        & A_1'' \arrow[swap]{ur}{\phi_2''}
    \end{tikzcd}
\]
Luckily all these cases are easy to distinguish,
as we saw in Section~\ref{sec:extensions}.

The eight $(2,2)$-isogenies corresponding to cyclic extensions do not result in trivial cycles.
In practice this means that, after a choice for our initial $(2,2)$-isogeny corresponding to one of 15 possible edges, we are left with only 8 options at every next step along the way.  
%
%
%
This implies that we should not only keep track of our current vertex by some form of equation, but also by some order of the roots of that equation (or more precisely, by a quadratic splitting). 



This observation means we can hash up to 3 bits at every step in our hash function and that a hash will always correspond to computing a $(2^k,2^k)$-isogeny.

\subsection{Products of elliptic curves}

For our hash function, the vertices corresponding to products of elliptic curves are a nuisance for the following reasons.
\begin{itemize}
    \item There is no clear candidate invariant that is similar to the
        ordered triple in case of the genus-2 Cardona--Quer invariants.
        So ideally, we would prefer not to end the hash function in a 
        vertex like this.
    \item The formulas involving products of elliptic curves are a lot more involved than the Richelot isogenies, and their simplicity was one of the main reasons for the restriction to $(2,2)$-isogenies.
\end{itemize}

In the way we presented our hash function, we simply use Richelot isogenies only and let our hash function break down whenever we pass at a vertex corresponding to a product of elliptic curves. Given that this only occurs with probability $O(1/p)$, this only happens with negligible probability for practical values of $p$.

An alternative way of dealing with this is as follows. Assume we try to process a step in our hash function that corresponds to a $(2,2)$-isogeny between a jacobian of a genus-2 curve and a product of elliptic curves. Then (in the same step) we immediately choose one edge corresponding to a $(2,2)$-isogeny from the product of elliptic curves back to a jacobian of a genus-2 curve. This has to be done in a deterministic way and we should avoid the dual and bad extensions since they would result in small cycles in $\mathcal G_p$. Unfortunately Proposition \ref{prop:JtoE} tells us that we can only find 2 good extensions that possibly have the jacobian of a genus-2 curve as codomain. In the case of $E\times E$, with $E$ having $j$-invariant 0 or 1728, these kernels may both be to a product of elliptic curves again. Solving this issue can be done by either choosing $p\equiv 1\bmod 12$ (such that elliptic curves with $j$-invariant 0 and 1728 never occur), or by (deterministically) using the results from Proposition \ref{prop:EtoE} to add an extra step in this specific case.

A third option is to keep working with all the formulas for products of elliptic curves as well. This means we should find a way to merge the Cardona-Quer invariants and (unordered) pairs of $j$-invariants into one output type, which is only an issue when \textit{ending} in a product of elliptic curves.

\subsection{Initial choices}

As mentioned earlier, there is no known way to generate
the equation of a random superspecial genus-2 curve that is defined over
$\FF_{p^2}$. Some specific examples such as $y^2 = x^5-x$ with $p\equiv 5$ or $7 \bmod 8$ are listed in \cite{oort}. 
Unfortunately, the examples that are easiest to represent all have some $(2,2)$-isogenies with codomain the product of 2 supersingular elliptic curves. This seems to imply that we cannot avoid having to deal with vertices corresponding to products of elliptic curves, instead of ignoring them.

However, another initial choice to make is whether we start by picking one of 15 possible
edges or already restrict ourselves to 8, since this is needed for every
subsequent step anyway. We will take only 8 which
means we need to choose an initial quadratic splitting instead of just
an initial curve.\footnote{Remark that in this case,
Conjecture~\ref{conj:cyclic-jacobian-connected} is no longer strong
enough to prove that we can reach all vertices in $\thegraph_p$, 
since it relies on having all 15 initial $(2,2)$-isogenies present. 
However, there is no clear reason to assume that only allowing 8 out of 15 possible edges for our initial choice all of a sudden would disallow us to reach certain vertices.}
Fortunately this solves our problem of finding an appropriate starting curve in a way.
Consider $C_0$, the genus-2 curve given by $y^2 =
x(x-1)(x+1)(x-2)(x-1/2)$ defined over $\FF_p$ with $p>5$. Then $C_0$ is superspecial if and only if $p\equiv 5\bmod 6$ \cite{oort}. Now the vertex corresponding to the jacobian of $C_0$ has 4 neighbours that are products of supersingular elliptic curves. However, if we take the initial quadratic splitting $\left\{x^2-1, x^2-2x, x- \frac{1}{2} \right\}$,
then the 8 allowed outgoing $(2,2)$-isogenies all have the jacobian of a superspecial genus-2 curve as codomain. The only restriction this puts on our hash function is that we need to work with a prime $p$ such that $p\equiv 5\bmod 6$, but this is easy to enforce.

An issue that arises with this curve $C_0$ however, is that its jacobian has many automorphisms and hence has multiple outgoing isogenies with the same codomain.\footnote{Remark that in the elliptic curve case the same thing happens with for example $y^2 = x^3 + x$ with $p\equiv 3\bmod 4$.} More precisely, starting from the given splitting of $C_0$, the 8 good extensions only have 3 distinct codomains up to isomorphism, one of which even occurs with multiplicity 5, which leads to trivial cycles in our graph. An easy way to fix this is to simply take a (relatively short) deterministic path prior to starting to hash our input, or equivalently, pad the input with some zeroes from the right. For other possible starting curves, this padding can be used to additionally avoid products of elliptic curves. In the end, choosing a starting curve seems to be a choice between either a very compact representation, or not having to do any extra computations that avoid trivial collisions.

\subsection{Security}

The security of our hash function depends on the hardness of finding isogenies between certain p.p.\ abelian surfaces. A lot of the choices discussed in the previous subsections make slight alterations to the underlying mathematical hard problems. We will formulate them in a general form to keep them succinct since we don't think any of the changes would impact the hardness of the problems. In essence they are equivalent to the hard problems from the elliptic curve hash function in \cite{CGL}.

\begin{problem}\label{prob1}
Given two superspecial genus-2 curves $C_1$ and $C_2$ defined over $\FF_{p^2}$, find a $(2^k, 2^k)$-isogeny between their jacobians.
\end{problem}

\begin{problem}\label{prob2}
Given any superspecial genus-2 curve $C_1$ defined over $\FF_{p^2}$, find
\begin{enumerate}
    \item a curve $C_2$ and a $(2^{k}, 2^{k})$-isogeny $J_{C_1} \rightarrow J_{C_2}$,
    \item a curve $C'_2$ and a $(2^{k'}, 2^{k'})$-isogeny $J_{C_1} \rightarrow J_{C'_2}$,
\end{enumerate}
such that $C_2$ and $C_2'$ are $\overline{\FF}_p$-isomorphic. Here, it is allowed that $k=k'$ but in this case the kernels should be different.
\end{problem}

They are related to our hash function in the following way.

\begin{itemize}
    \item \textit{Preimage resistance:} Finding a preimage in our hash function implies a solution to Problem \ref{prob1} with $C_1=C_0$ as follows. Let $C_2$ be a representative of the isomorphism class of the output of the hash function. A preimage for that output corresponds to a path of length $k$ in our graph, or equivalently, a $(2^k, 2^k)$-isogeny between the jacobians of $C_0$ and $C_2$.
    
    \item \textit{Collision resistance:} Finding a collision in our hash function implies a solution to Problem \ref{prob1} with $C_1=C_0$ as follows. A collision in our hash function corresponds to two distinct paths in our graph with the same ending vertex. Equivalently this amounts to a pair of isogenies 
    \[ \phi: J_{C_0}\rightarrow J_{C_2} \quad \text{and} \quad \phi': J_{C_0}\rightarrow J_{C'_2} \] 
    of type $(2^k, 2^k)$ resp $(2^{k'}, 2^{k'})$ such that $C_2 \cong C_2'$, and with different kernels.
\end{itemize}

To our knowledge, there are no known ways to find isogenies of the said kinds between jacobians of (superspecial) genus-2 curves which perform better than the generic attacks.
In the classical case the best known such attack is
Pollard-rho, which can find a collision or
preimage in time complexity the square root of the number of
possibilities times the amount of time that one step computation takes. In our case we have a graph of size $O(p^3)$ and one step is simply a polynomial computation with some constants, which we can perform in time complexity $\log p$. Hence a Pollard-rho attack could find a solution to Problem \ref{prob1} or Problem \ref{prob2} in time $\tilde O(p^{3/2})$.

With quantum computers in mind, the best known attack is a claw finding algorithm to find a collision or preimage in the graph $\mathcal G_p$ \cite{clawfinding}. An attack like that has time complexity the third root of the size of the graph we work over, instead of the square root in the classical case. This implies we could find a solution to Problem \ref{prob1} or Problem \ref{prob2} in time $\tilde O(p)$.

\section{
    Implementation and timings
}
\label{section:implementation_and_timings}

We have implemented our hash function in Magma, taking into account all
the choices made from the previous section. The pseudocode can be found
below; the Magma code can be found in Appendix \ref{hashfunction}.
The subroutine \texttt{Factorization} is defined as follows:
when the input is a quadratic polynomial,
\texttt{Factorization} returns its two linear factors
(which, in this application, are guaranteed to exist over the ground field).
When the input is a linear polynomial,
it returns that polynomial and \(1\).

\begin{algorithm}
    \caption{Hashing a message $m$ using Richelot isogenies, with $\lambda$ bits of security on a classical computer}
    \label{alg:hash}
    \KwData{Message $m$ and security parameter $\lambda$}
    \KwResult{The hash of $m$ using Richelot isogenies in a graph
    $\thegraph_p$, or \(\bot\) (failure)}
    $S \gets [ (\{1,3\},\{2,5\},\{4,6\}), \ldots, (\{1,6\},\{2,4\},\{3,5\})]$
    \;
    $p \gets$ the smallest prime such that $p>2^{\lceil2\lambda/3\rceil}$ and $p\equiv 5\bmod 6$
    \;
    $(L_1,L_2,L_3,L_4,L_5,L_6) \gets (x-1, x+1, x, x-2, x-1/2, 1) \in \FF_{p^2}[x]^6$
    \;
    $m \leftarrow 2^{30}m$
    \;
    \While{$m > 0$}{
        $i \gets m \bmod 8$
        \;
        $m \gets (m-i)/8$
        \;
        $[G_1, G_2, G_3] \gets $ pairwise products of the \(L_j\) according to $S[i]$
        \;
        \(
            (L_1,L_2)
            \gets
            \texttt{Factorization}(H_1)
        \)
        \textbf{where} 
        \(H_1 := G_2'G_3 - G_2G_3'\)
        \label{alg:hash:sqrt-1}
        \;
        \(
            (L_3,L_4)
            \gets
            \texttt{Factorization}(H_2)
        \)
        \textbf{where} 
        \(H_2 := G_3'G_1 - G_3G_1'\)
        \label{alg:hash:sqrt-2}
        \;
        \If{$\mathrm{rank}(H_1,H_2) = 1$}{
            \Return{\(\bot\)}
            \tcp*{We have hit a vertex in \(\thegraphell_p\)}        
        }
        \(
            (L_5,L_6)
            \gets
            \texttt{Factorization}(H_3)
        \)
        \label{alg:hash:sqrt-3}
        \textbf{where} 
        \(H_3 := G_1'G_2 - G_1G_2'\)
        \;
    }
    \Return{invariants of genus-2 curve defined by the equation $y^2 =
    \prod_{j=1}^6L_j$}
\end{algorithm}


\begin{remark}
    We don't keep track of the leading coefficient of the polynomial determining the genus-2 curve, for the reason that a twist of a curve does not change its Cardona--Quer invariants anyway.
    Similarly we never need to know the exact value of $\delta = \det(G_1, G_2, G_3)$. We are only interested in whether or not $\delta$ equals 0, and with the formulas from the preliminaries, this condition can be easily verified to be equivalent to all $H_i$ being a multiple of one another. Hence it suffices to check if rank$(H_1, H_2) < 2$ instead.
\end{remark}


The deterministic choice of ordering the edges depends on 2 things. First, there's the (arbitrary) way we hardcoded the set $S$, which denotes the pairs of indices of the allowed quadratic splittings. Secondly, the subroutine \texttt{Factorization} automatically orders the roots of the polynomial in some way. In this statement we silently assumed that this happens deterministically by the used software, which is the case for Magma.

Note that we do not claim this code is optimized in any way. For example
we simply pick the smallest prime $p$ possible that satisfies our needs,
whereas better choices may speed up the arithmetic in the field we work
over. Additionally, we did not implement any proper padding schemes, nor did we
make the function constant time (though this is not required for
public inputs). The main goal of the implementation is to see what the order of magnitude is for the speed of the hash function and we leave possible optimizations for future work.

As a final remark we want to point out that the output of the hash
function is dependent on the security level required. The output is a
triple in a quadratic field extension of a finite field of
characteristic roughly $2\lambda/3$ bits in case of classical security.
This means our output has bit length $4\lambda$, even though the 
number of possible hash values is only $2\lambda$ bits. It may be possible to compress this but we leave this discussion for future research, too.

The implementation of our genus two CGL hash function algorithm was done in Magma (version 2.32-2) on an Intel(R) Xeon(R) CPU E5-2630 v2 @ 2.60GHz with 128 GB memory. For every prime size we averaged the speed over 1000 random inputs of 100 bits. A summary of our timed results can be found in the following table.

\begin{center}
\begin{tabular}{|c|c|c|c|c|}
  \hline
  & $p\approx 2^{86}$ & $p\approx 2^{128}$ & $p\approx 2^{171}$ & $p\approx 2^{256}$ \\
  \hline \hline
  bits of classical security & 128 & 192 & 256 & 384 \\
  \hline
  bits of quantum security & 86 & 128 & 170 & 256 \\
  \hline
  time per bit processed & 5.01ms & 6.52ms & 9.33ms & 15.70ms \\
  \hline
  output bits & 516 & 768 & 1026 & 1536 \\
  \hline
\end{tabular}
\end{center}

\section{
    Comparison to Charles--Goren--Lauter, and concluding remarks
}
\label{section:remark_improvement}

    The computational cost of each iteration of the main loop in
    Algorithm~\ref{alg:hash}
    is dominated by the costs of the three square roots required
    to factor the \(H_i\) in Lines~\ref{alg:hash:sqrt-1},
    \ref{alg:hash:sqrt-2}, and~\ref{alg:hash:sqrt-3}.
    At first glance, this would appear to give no advantage over
    the Charles--Goren--Lauter hash function:
    we compute essentially one expensive square root per bit of hash
    input.
    However, there are two important remarks to be made here:
    \begin{enumerate}
        \item
            The entropy in the Charles--Goren--Lauter hash function is linear in $p$, whereas in our case it is cubic in $p$. This implies that for the same security parameters we can work over much smaller finite fields, so the square roots are substantially easier to compute.
        \item
            The square roots, along with the \(H_i\),
            can be computed completely independently.
            The algorithm therefore lends itself well to 
            three-well parallelization,
            as well as to vectorization techniques
            on suitable computer architectures.
    \end{enumerate}
    
    From this point of view, our proposal is a conjecturally secure version of an ill-constructed hash function that we could call 3CGL, where the message $m$ is split up in $3$ chunks $m_1, m_2, m_3$. Each of these $m_i$ is then hashed using Charles, Goren and Lauter's hash function into a supersingular $j$-invariant $j_i$, resulting in a combined hash value $(j_1, j_2, j_3) \in \FF_{p^2}$. Note that, here too, the number of possible outcomes is $O(p^3)$. However, the security of 3CGL clearly reduces to the problem of finding collisions or pre-images for one of the chunks, which Pollard-rho can do in time $\tilde{O}(p^{1/2})$, compared to $\tilde{O}(p^{3/2})$ in our case.

    While this convinces us that genus 2 hash functions deserve their place in the arena of isogeny-based cryptography, more research is needed to have a better assessment of their security and performance. One potentially interesting track is to adapt Doliskani, Pereira and Barreto's recent speed-up to  Charles, Goren and Lauter's hash function from~\cite{DPB}, which has the appearance of an orthogonal improvement that may also apply to genus $2$. From a security point of view, it would be interesting to understand to what extent the discussion from~\cite{KLPT,petit}, transferring the elliptic curve analogs of Problems~\ref{prob1} and~\ref{prob2} to questions about orders in non-commutative algebras and raising some concerns about using special starting curves, carries over to genus $2$.

\subsection*{Acknowledgements}

We are grateful to Yan Bo Ti for sharing with us a preliminary copy of~\cite{flynnti} and to Frederik Vercauteren for helpful feedback with regards to this paper. This work was supported in part by the Research Council KU Leuven grants C14/18/067 and STG/17/019.

\bibliographystyle{plain}
\bibliography{bib}

\appendix
\newpage
\section{
    Proof of Theorem \ref{grobnertheorem}
}
\label{sec:grobnerproof}

We now settle part (1) of Theorem~\ref{grobnertheorem}, as an immediate consequence to:

\begin{theorem}
Let $C$ be a genus-$2$ curve over a field $K$ of characteristic different from $2$ and $5$. Then the number of outgoing $(2,2)$-isogenies with codomain a product of elliptic curves is at most $6$.\label{grobnerapp}
\end{theorem}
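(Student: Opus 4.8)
Here is the plan. The first move is to pass to a $\overline{K}$-isomorphic model with $f$ monic of degree $6$ and six distinct roots $X=\{\alpha_1,\dots,\alpha_6\}\subset\mathbb{P}^1(\overline{K})$; this affects neither the $15$ outgoing $(2,2)$-isogenies nor the isomorphism classes of their codomains. By Proposition~\ref{prop:Richelot}, the codomain attached to a quadratic splitting $\{G_1,G_2,G_3\}$ is a product of elliptic curves exactly when $\delta=\det(G_1,G_2,G_3)=0$, i.e.\ exactly when the three binary quadratics $G_1,G_2,G_3$ are linearly dependent. Since they have pairwise distinct root sets they are pairwise non-proportional, so $\delta=0$ means precisely that they span a two-dimensional space of binary quadratic forms, i.e.\ lie in a pencil. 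The theorem thus becomes a statement about the six points $X$: \emph{at most $6$ of the $15$ partitions of $X$ into three pairs arise as the three root-pairs of a pencil of binary quadratics.}

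The crucial step is to identify these distinguished partitions group-theoretically. Working in characteristic $\neq 2$, a pencil of binary quadratics meets the conic of perfect squares in either two points (``transverse'', pencil $=\langle(x-s_1)^2,(x-s_2)^2\rangle$ with $s_1\neq s_2$) or one (``tangent''); in the tangent case all members of the pencil share a common linear factor, which is impossible here because the six roots are distinct and each lies in exactly one $G_i$. In the transverse case, introducing the coordinate $t=(x-s_1)/(x-s_2)$ one checks that $G_i\in\langle(x-s_1)^2,(x-s_2)^2\rangle$ if and only if the involution $\sigma\in\mathrm{PGL}_2(\overline{K})$ with fixed locus $\{s_1,s_2\}$ (namely $t\mapsto -t$) interchanges the two roots of $G_i$; and $s_1,s_2\notin X$, since a root fixed by $\sigma$ would be a double root of its $G_i$. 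Hence every $\delta=0$ partition is the orbit partition on $X$ of a \emph{fixed-point-free} involution $\sigma$ preserving $X$, and conversely; the correspondence is injective, because if two involutions induce the same partition their product fixes all six points and is therefore trivial. The problem is now to bound the number of fixed-point-free involutions in the finite group $G:=\{g\in\mathrm{PGL}_2(\overline{K}):g(X)=X\}$.

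For this I would invoke Dickson's classification of the finite subgroups of $\mathrm{PGL}_2$. Because $|X|=6\geq 3$ the action of $G$ on $X$ is faithful, so $G\hookrightarrow S_6$; and the ``modular'' subgroups (copies of $\mathrm{PSL}_2(q)$, $\mathrm{PGL}_2(q)$, or $\mathbb{Z}/\ell\rtimes\mathbb{Z}/m$) have no invariant $6$-element subset of $\mathbb{P}^1$ once $\operatorname{char}K\neq 2,5$ — the one obstruction is $\mathbb{P}^1(\mathbb{F}_5)$, which has six points precisely in characteristic $5$. So $G$ is cyclic, dihedral, $A_4$, $S_4$, or $A_5$. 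One then matches the (classical) orbit sizes of each of these on $\mathbb{P}^1$ against $|X|=6$ and counts: $A_5$ (icosahedral) has smallest orbit $12$ and cannot occur; $A_4$ forces $X$ to be the six edge-midpoints of a tetrahedron, on which each of its three involutions fixes two points, giving $0$; a cyclic group contributes at most $1$; the finitely many dihedral cases $D_2,\dots,D_6$, with their transitive and split orbit patterns, each give at most $6$ by direct inspection (for instance $D_6$ on the sixth roots of unity gives $4$); and $S_4$ forces $X$ to be the six vertices of an octahedron, on which precisely the six transpositions act without fixed points, giving exactly $6$. Taking the maximum over all cases yields the bound $6$, which is sharp, attained by the octahedral configuration $\{0,\pm1,\pm\sqrt{-1},\infty\}$ (the roots of $x^5-x$) when $\operatorname{char}K\neq 2,3,5$.

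The main obstacle is the bookkeeping at the end: verifying cleanly that no modular subgroup slips in for small characteristic (in particular $\operatorname{char}K=3$), and running through the dihedral and $A_4/S_4$ orbit configurations without omission. A characteristic-uniform alternative is to make the whole question a polynomial computation: parametrize $X$ by the coefficients of $f$, write the fifteen determinants $\delta_I$ of Proposition~\ref{prop:Richelot}, and show by a Gröbner basis computation that the ideal generated by any seven of the $\delta_I$, saturated by the discriminant of $f$ (so the roots stay distinct), is trivial unless $\operatorname{char}K\in\{2,5\}$ — trading the structural argument for a mechanical, if opaque, calculation, which is the route that can be pushed through below.
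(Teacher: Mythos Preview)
Your proposal is correct, and its first half is a genuinely different route from the paper's. The paper argues purely computationally: it writes the fifteen $3\times3$ determinants in the Weierstrass roots $\alpha_1,\ldots,\alpha_6$, adjoins the normalization $\prod_{i<j}(\alpha_i-\alpha_j)=1$, and for each of the $\binom{15}{7}$ seven-element subsets verifies in Magma that the resulting ideal over $\QQ$ has Gr\"obner basis $\{1\}$; tracking the denominators in the lift of $1$ isolates the potentially bad primes $\{2,3,5,7,11\}$, and rerunning over $\FF_3$, $\FF_7$, $\FF_{11}$ finishes the remaining characteristics. This is exactly the fallback you describe in your last paragraph, so that part of your plan coincides with the paper.

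Your structural argument via fixed-point-free involutions is more illuminating. The bijection between $\delta=0$ splittings and involutions of $\mathrm{PGL}_2$ preserving $X$ with no fixed point in $X$ is correct (the tangent-pencil case is excluded by pairwise coprimality of the $G_i$, and two involutions inducing the same pairing on $X$ agree on six points of $\mathbb{P}^1$, hence coincide). Dickson's classification then reduces everything to a finite case check, and the maximum $6$ is realised by the six transpositions in $S_4$ acting on the octahedral set $\{0,\infty,\pm1,\pm i\}$---which simultaneously identifies $y^2=x^5-x$ as the extremal curve, something the Gr\"obner computation does not reveal. One small wrinkle: your claim that ``$A_4$ forces $X$ to be the six edge-midpoints of a tetrahedron, giving $0$'' is literally true, but that configuration \emph{is} the octahedron and its full stabilizer is $S_4$, so $G=A_4$ never actually occurs; the bound is unaffected. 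The substantive issue you already flag---making the dihedral and polyhedral orbit analysis uniform in characteristics $3$ and $7$, where order-$p$ elements become unipotent and orbit sizes shift---is exactly where deferring to the Gr\"obner computation, as both you and the paper ultimately do, is the pragmatic choice.
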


\begin{proof}
We can assume that $K$ is algebraically closed, so that $C$ admits a model of the form $y^2 = \prod_{i=1}^6 (x-\alpha_i)$ for roots $\alpha_i \in K$ satisfying
\[ \prod_{1\leq i<j\leq 6}(\alpha_i-\alpha_j) = 1. \]
Due to the formulas for Richelot isogenies, the
number of $(2,2)$-isogenies with codomain a product of elliptic
curves is determined by how many among the 15 different equations of
the form
\begin{equation}
    \det
    \begin{pmatrix}
        1 & \alpha_{\sigma(1)}+\alpha_{\sigma(2)} & \alpha_{\sigma(1)}\alpha_{\sigma(2)}\\
        1 & \alpha_{\sigma(3)}+\alpha_{\sigma(4)} & \alpha_{\sigma(3)}\alpha_{\sigma(4)}\\
        1 & \alpha_{\sigma(5)}+\alpha_{\sigma(6)} & \alpha_{\sigma(5)}\alpha_{\sigma(6)}
    \end{pmatrix}
    = 0\,,
    \label{detsystem}
\end{equation}
where $\sigma$ is a permutation of $\{1,2,3,4,5,6\}$, can be simultaneously satisfied.

To show that no more than $6$ can occur we work with Gr\"obner
bases. The permutations of equation \eqref{detsystem} determine, up
to sign, $15$ different polynomials $f_1,\ldots,f_{15}$ in
$\FF[\alpha_1,\ldots,\alpha_6]$, where $\FF$ is the prime subfield of $K$. We pick a subset of $7$ of these equations and
form the ideal $I \subset \FF[\alpha_1,\ldots,\alpha_6]$ generated by them. We then add the polynomial $\rho = \prod_{i,j} (\alpha_i - \alpha_j) - 1$
as a generator of $I$ as well. 
Now we determine a Gr\"obner basis $G$ for $I$. If $G=\{1\}$ then the
variety defined by $I$ is empty and hence those $7$ equations we
chose can not be satisfied simultaneously, under the assumption that
all $\alpha_i$ are different. If we repeat this process for all
possible subsets of $7$ equations and find $G=\{1\}$ in all cases,
then we are done. There are $\binom{15}{7}=6435$ possible ways of
selecting such a subset, but this is not a problem for
Magma.\footnote{Remark that by using the symmetry in the variables,
it is possible to reduce the number of case distinctions needed, but
we see no need to optimize this since it is a one time computation.} 

When running the algorithm we choose $\FF = \QQ$, for which we indeed find $G = \{1\}$ in each of the cases. This only shows that there are no solutions if $K$ is of characteristic $0$, while we typically want to work over a field with
prime characteristic. If the Gr\"obner basis $G$ equals $\{1\}$
however, we can write $1$ as linear combination of that particular
choice of polynomials $f_i$, say for example $1 =
h_1f_1+\ldots+h_7f_7 + h_8\rho$. If we then multiply both sides of the
equations by the lowest common multiple of the denominators of the
coefficients of the $h_i$, we obtain an equation with coefficients
in $\ZZ[\alpha_1,\ldots,\alpha_6]$. So as long as the characteristic
$p$ of the field we work over does not divide the lowest common
multiple of the denominators of the coefficients of those $h_i$, we
still find a contradictory system. Hence it suffices to keep track of
the primes that divide the denominators. The resulting primes are
$2$, $3$, $5$, $7$ and $11$. It then suffices to rerun the Gr\"obner basis computations for $\FF = \FF_p$ with $p=3,7,11$, leading to the desired conclusion.
\end{proof}

\label{grobnerproof}
The following is the Magma code that was used. The specific cases $p\in\{3,7,11\}$ can be checked by replacing \verb|Rationals()| by \verb|GF(p)| for any one value of $p$, and by removing the innermost loop that starts with \verb|for coord in c do| completely.

\footnotesize
\vspace{10pt}
\begin{spverbatim}
Q<a1,a2,a3,a4,a5,a6> := PolynomialRing(Rationals(),6);
S := {1,2,3,4,5,6};
I := {};

for sub1 in Subsets(S,2) do
  subseq1 := SetToSequence(sub1);
  for sub2 in Subsets(S diff sub1, 2) do
    subseq2 := SetToSequence(sub2);
    subseq3 := SetToSequence(S diff (sub1 join sub2));
    M := Matrix(Q,3,3,
               [ 1, Q.subseq1[1] + Q.subseq1[2], Q.subseq1[1]*Q.subseq1[2], 
                 1, Q.subseq2[1] + Q.subseq2[2], Q.subseq2[1]*Q.subseq2[2], 
                 1, Q.subseq3[1] + Q.subseq3[2], Q.subseq3[1]*Q.subseq3[2] ] );
    eqn := Determinant(M);
    if -eqn notin I then
      I join:= {Determinant(M)};
    end if;
  end for;
end for;

disc := Q ! 1;
for sub in Subsets(S,2) do
  subseq := SetToSequence(sub);
  disc *:= Q.subseq[1] - Q.subseq[2];
end for;

groebnerboolean := true;
badprimes := {};
for j in Subsets(I,7) do
    J := {disc-1};
    J join:= j;
    if GroebnerBasis(Ideal(J)) ne [1] then groebnerboolean := false; end if;
    J := IdealWithFixedBasis(SetToSequence(J));
    c := Coordinates(J, Q ! 1);
    for coord in c do
        for coeff in Coefficients(coord) do
            badprimes join:= SequenceToSet(PrimeDivisors(Denominator(coeff)));
        end for;
    end for;
end for;
print groebnerboolean; badprimes;
\end{spverbatim}
\normalsize
\vspace{10pt}
Theorem~\ref{grobnerapp} cannot be proved in this way for $p=2$,
because equations for hyperelliptic curves are a lot more involved in
fields of even characteristic. Theorem~\ref{grobnertheorem} remains true in this case however, since there are no superspecial genus-2 jacobians in fields of even characteristic.

The following example shows why
Theorem~\ref{grobnertheorem} is not true for $p=5$,
and also provides an example to show that the bound of 6 is sharp.

\begin{example}
    Let $C$ be the genus-2 curve given by $y^2=x^5-x$ over $\FF_{p}$
    (which is superspecial when \(p \equiv 5 \bmod{8}\)),
    and let $i \in \FF_{p^2}$ be a square root of $-1$.
    Of the fifteen quadrating splittings of \(x^5 - x\),
    the six splittings
    \begin{align*}
        \{x,x^2 - (i + 1)x + i,x^2 + (i + 1)x + i\}
        \,,
        &
        \quad
        \{x,x^2 + (i - 1)x - i,x^2 - (i - 1)x - i\}
        \\
        \{x - 1,x^2 + 1,x^2 + x\}
        \,,
        &
        \quad
        \{x + 1,x^2 + 1,x^2 - x\}
        \,,
        \\
        \{x - i,x^2 - 1,x^2 + ix\}
        \,,
        &
        \quad
        \{x + i,x^2 - 1,x^2 - ix\}
    \end{align*}
    all have \(\delta = 0\), so they are always singular.
    The quadratic splitting
    \(\{x,x^2 + 1,x^2 - 1\}\)
    has \(\delta = \pm 2\) (the sign of \(\delta\) may change with the
    order of the factors), and so is never singular.
    There are eight splittings remaining.
    The four splittings
    \begin{align*}
        \{x - 1,x^2 - ix,x^2 + (i + 1)x + i\}
        \,,
        \quad
        \{x - i,x^2 + x,x^2 + (i - 1)x - i\}
        \,,
        \\
        \{x + 1,x^2 + ix,x^2 - (i + 1)x + i\}
        \,,
        \quad
        \{x + i,x^2 - x,x^2 - (i - 1)x - i\}
        \phantom{\,,}
        \intertext{
            all have \(\delta = \pm(3i + 1)\),
            while their ``conjugates'', the four splittings 
        }
        \{x - 1,x^2 + ix,x^2 - (i - 1)x - i\}
        \,,
        \quad
        \{x + i,x^2 + x,x^2 - (i + 1)x + i\}
        \,,
        \\
        \{x + 1,x^2 - ix,x^2 + (i - 1)x - i\}
        \,,
        \quad
        \{x - i,x^2 - x,x^2 + (i + 1)x + i\}
        \phantom{\,,}
    \end{align*}
    have \(\delta = \pm(3i-1)\).

    Now, when \(p = 5\),
    we may take \(i = 2\)
    or \(i = 3\).
    If \(i = 2\) then \(3i-1 = 0\),
    so the last set of four become singular
    (and the penultimate set of four have \(\delta = \pm2\)),
    while if \(i = 3\) then \(3i + 1 = 0\),
    so the penultimate set of four become singular
    (and then the last set of four have \(\delta = \pm2\)).
    In either case, for \(p = 5\) we have exactly four additional singular splittings,
    making ten in total;
    and we cannot have \(i = 2\) or \(3\) in any other characteristic,
    so if \(p \not= 5\) then there are only six singular splittings.
%
\end{example}

\newpage

\section{
    Hash function
}
\label{hashfunction}

The following is the Magma code that implements the hash function we described with the specific choices we have made.

\footnotesize
\vspace{10pt}
\begin{spverbatim}
function G2CGLhash(lambda, message)

splits := [ [{1,3},{2,5},{4,6}], [{1,3},{2,6},{4,5}],
[{1,4},{2,5},{3,6}], [{1,4},{2,6},{3,5}],
[{1,5},{2,3},{4,6}], [{1,5},{2,4},{3,6}],
[{1,6},{2,3},{4,5}], [{1,6},{2,4},{3,5}] ];

p:= 2^Ceiling(lambda*2/3); repeat p := NextPrime(p); until p mod 6 eq 5;
F<a> := GF(p^2);
R<x> := PolynomialRing(F);
factors := [x-1, x+1, x, x-2, x-1/2, 1];
mbase8 := []; message := message*2^30;
while message gt 0 do Append(~mbase8, message mod 8); message div:= 8; end while;

function fac(pol)
r := [ rt[1] : rt in Factorization(pol)];
if #r eq 1 then Append(~r,1); end if;
return r;
end function;

for i := 1 to #mbase8 do
  split := splits[mbase8[i]+1];
  G1 := &*[ factors[j] : j in split[1]];
  G2 := &*[ factors[j] : j in split[2]];
  G3 := &*[ factors[j] : j in split[3]];
  h1 := Derivative(G2)*G3 - G2*Derivative(G3); r1 := fac(h1);
  h2 := Derivative(G3)*G1 - G3*Derivative(G1); r2 := fac(h2);
  if Rank(Matrix(F, 2, 3, [ [Coefficient(h1,j) : j in [0..2]],
                            [Coefficient(h2,j) : j in [0..2]] ])) eq 1 then
    print "No hash for this value possible."; return 0;
  end if;
  h3 := Derivative(G1)*G2 - G1*Derivative(G2); r3 := fac(h3);
  factors := r1 cat r2 cat r3;         
end for;

return G2Invariants(HyperellipticCurve(&*factors));

end function;


\end{spverbatim}
\normalsize

\end{document}